\numberwithin{equation}{section}
\newcommand{\cket}[1]{|#1\rangle}
\newcommand{\bracket}[2]{\langle #1 | #2\rangle}
\newcommand{\mcl}[1]{\ensuremath{\mathcal{#1}}}
\newcommand{\Acal}{{\mcl A}}
\newcommand{\Bcal}{{\mcl B}}
\newcommand{\EcalKDC}{{\mcl E}_{\mathrm{KD+}}}
\newcommand{\EcalKDCpu}{{\mcl E}_{\mathrm{KD+}}^{\mathrm{pure}}}
\newcommand{\spanR}{{\mathrm{span}_{\R}}}
\newcommand{\Hcal}{{\mcl H}}
\newcommand{\N}{\mathbb{N}}
\newcommand{\Z}{\mathbb{Z}}
\newcommand{\R}{\mathbb{R}}
\newcommand{\C}{\ensuremath{\mathbb{C}}}
\newcommand{\bbone}{\mathbb{I}}
\newcommand{\conv}[1]{\mathrm{conv}\left(#1\right)}
\newcommand{\card}[1]{\left|#1\right|}
\newcommand{\VR}{V_{\mathrm{KDr}}}
\renewcommand{\H}{\mcl{H}}
\newcommand{\IntEnt}[2]{\llbracket #1 , #2 \rrbracket}
\newcommand{\convAB}{\conv{\Acal \cup \Bcal}}
\renewcommand{\epsilon}{\varepsilon}
\newtheorem{Theorem}{Theorem}[section]
\newtheorem{Cor}[Theorem]{Corollary}
\newtheorem{definition}[Theorem]{Definition}
\newtheorem{Lemma}[Theorem]{Lemma}
\begin{document}

\title{The Kirkwood-Dirac representation associated to the Fourier transform for finite abelian groups: positivity}

\author[,1]{Stephan De Bi\`evre\thanks{stephan.de-bievre@univ-lille.fr}}

\author[,1]{Christopher Langrenez\thanks{christopher.langrenez@univ-lille.fr}}

\author[,2]{Danylo Radchenko\thanks{danylo.radchenko@univ-lille.fr}}

\affil[1]{Univ. Lille, CNRS, Inria, UMR 8524, Laboratoire Paul Painlev\'e, F-59000 Lille, France}

\affil[2]{Univ. Lille, CNRS, Laboratoire Paul Painlev\'e, F-59000 Lille, France}

\maketitle

\begin{abstract}
We construct and study the Kirkwood-Dirac (KD) representations naturally associated to the Fourier transform of finite abelian groups $G$. We identify all pure KD-positive states and all KD-real observables for these KD representations. We provide a necessary and sufficient condition ensuring  that all KD-positive states are  convex combinations of pure KD-positive states. We prove  that for $G=\Z_{d}$, with $d$ a prime power, this condition is satisfied.  We provide examples of abelian groups where it is not. In those cases, the convex set of KD-positive states contains states outside the convex hull of the pure KD-positive states. 
\end{abstract}

\section{Introduction}
Quasiprobability representations have played an important role in the development of quantum mechanics. The best known --- and the oldest --- of these representations are the ones based on the use of the Wigner function~\cite{wigner1932} (or Weyl symbol), as well as the related Glauber-Sudarshan $P$ function and the Husimi function~\cite{Husimi40,Sudarshan63,Glauber63}, which are all intimately linked to the presence of conjugate variables in the theory, that obey the canonical commutation relations. Hence, such quasiprobability representations are closely related to the Fourier transform on $\R^n$ (or on $\Z_{d}$) and to the irreducible representations of the Heisenberg group of $\R^n$ (or of $\Z_{d}$). They provide phase-space representations of quantum mechanics in which quantum states are represented by quasiprobabilities, and observables by real-valued functions on phase space. These quasiprobability representations have found numerous applications in the study of quantum foundations, in quantum optics, and in quantum information theory. We refer to~\cite{leonhardt,Serafini,hudson1974wigner,cohen1966,cagl69a,cagl69b,Fe11,spekkens2008} and references therein for further details on such applications.
More recently, the Kirkwood-Dirac quasiprobability representations of quantum mechanics have come to the forefront in various parts of quantum physics. They provide an increased flexibility and range of applicability, since they are not contingent on the presence of conjugate variables in the theory. For extensive recent reviews of the research on Kirkwood-Dirac distributions, and of their applications to physics, we refer to~\cite{arvidssonshukur2024properties,lostaglio2023kirkwood}. In this paper, we construct and study the Kirkwood-Dirac representation naturally associated to the Fourier transform of a finite abelian group $G$. 

A concise definition of a general Kirkwood-Dirac representation can be given as follows.  Let $\Hcal$ be a $d$-dimensional Hilbert space and let $A$ and $B$ be two self-adjoint operators on $\Hcal$, with respective eigenbases $\left\{\cket{a_i}\right\}_{i}$ and  $\left\{\cket{b_j}\right\}_{j}$, and eigenvalues $\left(a_i\right)_{i}$ and $\left(b_j\right)_{j}$. We will always assume that the transmission matrix $U_{ij}=\langle a_i|b_j\rangle$ does not have zeroes and write $m=\min_{ij}|\langle a_i|b_j\rangle|$. Introducing
\begin{equation}\label{eq:frames}
    S_{ij}=\bracket{a_i}{b_j} \cket{a_i}\bra{b_j},\quad \widetilde{S}_{ij}=|\langle a_i|b_j\rangle|^{-2}S_{ij},
\end{equation}
we define the lower, respectively upper, Kirkwood-Dirac symbol of an operator $C$ by 
\begin{equation}
    Q_{ij}[C]= \Tr (S_{ij}^{\dagger} C),\quad \widetilde{Q}_{ij}[C]= \Tr (\widetilde{S}_{ij}^{\dagger} C).
\end{equation}
The pair $Q, \widetilde{Q}$ is said to be a Kirkwood-Dirac quasiprobability representation of quantum mechanics on $\Hcal$~\cite{Fe11}. Although this representation depends strongly on the choice of $A$ and $B$, we shall not indicate this choice in the notation. The choice of $A$ and $B$  must be adapted to the physical system considered which  implies that the Kirkwood-Dirac representation can be adapted to many different systems, as indicated above. One has
\begin{equation}\label{eq:quasiprob}
   \Tr(C)=\sum_{ij} Q_{ij}[C], \quad \langle a_i|C|a_i\rangle=\sum_j Q_{ij}[C],\quad \langle b_j|C|b_j\rangle=\sum_i Q_{ij}[C],
\end{equation}
and
\begin{equation}\label{eq:overlap}
   \Tr D^\dagger C=\sum_{ij} \overline{\widetilde{Q}_{ij}[D]}Q_{ij}[C].
\end{equation}
This equation is often referred to as the overlap identity. The map $C\to Q[C]$ is a bijection and one can reconstruct $C$ from its Kirkwood-Dirac symbol:
\begin{equation}\label{eq:reconstruction}
   C = \sum_{(i,j)\in\IntEnt{1}{d}} Q_{ij}[C]\frac{\cket{a_i}\bra{b_j}}{\bracket{b_j}{a_i}}.
\end{equation}
When $C=\rho$ is  a density matrix, meaning a positive operator of unit trace, Eq.\eqref{eq:quasiprob} shows $Q_{ij}[\rho]$ has the properties of a joint probability measure with marginals the Born rule probabilities for the observables $A$ and $B$, except that $Q_{ij}[\rho]$ is complex valued, not necessarily positive. This explains why $Q_{ij}[\rho]$ is referred to as the Kirkwood-Dirac quasiprobability distribution of $\rho$. It is well-known that it is not possible to assign a joint probability distribution  for two non-commuting observables to all density matrices $\rho$; we refer to~\cite{Fe11,spekkens2008,lostaglio2023kirkwood} for proofs of this fact.
Nevertheless, there do always exist some density matrices $\rho$ for which $Q_{ij}[\rho]\geq 0$ for all $i,j$: in that case $Q_{ij}[\rho]$ may be thought of as a joint probability distribution.  Such density matrices are said to be Kirkwood-Dirac positive. We will denote the convex set of Kirkwood-Dirac-positive density matrices by $\EcalKDC$. We will further denote  by $\VR$ the real vector space of self-adjoint operators that have a real upper Kirkwood-Dirac symbol. Note that, since $\widetilde{Q}_{ij}(\rho) = \frac{1}{\left|\bracket{a_i}{b_j}\right|^{2}}Q_{ij}(\rho)$ for any density matrix $\rho$, the set of Kirkwood-Dirac-positive states $\EcalKDC$ is included in $\VR$. 
In quantum mechanics, states are represented by density matrices $\rho$ and observables by self-adjoint operators $D=D^\dagger$, with $\Tr(D\rho)$ representing the expectation value of $D$ in the state $\rho$. The ``overlap formula'' Eq.\eqref{eq:overlap} allows us to interpret $\Tr(D\rho)$ as the expectation value of the random variable $\widetilde{Q}_{ij}[D]$ with respect to the quasiprobability $Q_{ij}[\rho]$. For Kirkwood-Dirac-positive density matrices $\rho\in\EcalKDC$ and for observables $D\in \VR$, this becomes: 
\begin{equation}
    \Tr D\rho= \sum_{ij} \widetilde{Q}_{ij}[D]Q_{ij}[\rho],
\end{equation}
which can now be understood as a  bona fide expectation value of the  real random variable $\widetilde{Q}_{ij}(D)$ with respect to the probability measure $Q_{ij}(\rho)$. In this manner one obtains a true probability representation of the ``fraction'' of quantum mechanics given by the states in $\EcalKDC$ and the observables in $\VR$. We further point out that it has been shown in the context of quantum metrology \cite{arvidsson-shukur2020} and quantum simulation \cite{pashayan2015} that, to obtain a quantum advantage, one needs Kirkwood-Dirac-nonpositivity of $\rho$: in this sense, the positive fraction of quantum mechanics referred to above can be thought of as classical; see Ref.~\cite{arvidssonshukur2024properties} for details on these issues.

Let us  point out that similar questions arise in quantum optics, concerning the positivity of the Wigner and Glauber-Sudarshan quasiprobability distributions: in each case, the full identification or characterization of the set of states with a positive quasiprobability distribution is of considerable current interest; we refer to~\cite{gross2006,vanherstraetencerf2021,herstraetenetal2023,diasprata2023a,Bievre19,ryspagal15}  and references therein for further details. 

The above considerations naturally lead to the question of how to  characterize $\EcalKDC$.
It is easy to check that 
\begin{equation}
    \convAB\subseteq \EcalKDC,
\end{equation}
where
\begin{equation}
    \Acal=\{|a_i\rangle\langle a_i||1\leq i\leq d\},\quad \Bcal=\{|b_j\rangle\langle b_j| |1\leq j\leq d\}.
\end{equation}
Denoting by $\EcalKDCpu$ the set of pure Kirkwood-Dirac-positive states, and recalling that a pure state is a rank one orthogonal projector, one therefore has
\begin{equation}\label{eq:KDgeom}
    \convAB\subseteq \conv{\EcalKDCpu}\subseteq \EcalKDC.
\end{equation}
For an arbitrary choice of $\Hcal, A$ and $B$, it is not straightforward to describe the sets $\EcalKDCpu$ and $\EcalKDC$ explicitly and in particular to decide which of the two inclusions in Eq.\eqref{eq:KDgeom} are equalities, if any.  Results on these questions have been obtained recently in a variety of situations. 

In \cite{arvidsson-shukuretal2021,debievre2021,DeB23}, a general link is established between Kirkwood-Dirac positivity of pure states and uncertainty: pure Kirkwood-Dirac-positive states have low uncertainty. In
\cite{DeB23}, it is further shown that, provided $d$ is a prime number and the transition matrix $U$ between the two bases is  the discrete Fourier transform, the only pure Kirkwood-Dirac-positive states are the basis states. In other words, in that case, $\EcalKDCpu=\Acal\cup\Bcal$. In~\cite{XU2024}, Xu extended this result, provided the dimension $d$ is prime,   to general mutually unbiased bases (MUB bases), for which $|\langle a_i|b_j\rangle|=d^{-1/2}$.
Concerning mixed states, it was proven in \cite{langrenezetal2024}  that, if the bases $(\cket{a_i})_{i\in\IntEnt{1}{d}}$, $(\cket{b_j})_{j\in\IntEnt{1}{d}}$ are chosen uniformly with respect to Haar measure, then, with probability one, all inclusions in Eq.\eqref{eq:KDgeom} are equalities, so that
\begin{equation}\label{eq:simple}
    \convAB= \conv{\EcalKDCpu} = \EcalKDC.
\end{equation}
In addition, in those cases, the set of observables with real Kirkwood-Dirac symbol is explicitly identified as
$$
\VR=\spanR{\{|a_i\rangle\langle a_i|, |b_j\rangle\langle b_j|\mid 1\leq i,j\leq d\}}.
$$
Examples where the inclusions in Eq.\eqref{eq:KDgeom} are strict, so that the geometry of $\EcalKDC$ and of $\VR$ is considerably more complex are given in~\cite{langrenez2023characterizing2}. For example, $\EcalKDC$ is then not necessarily a polytope and its extreme points can be hard to identify. In such cases, witnesses of nonpositivity can be used to study the geometry of Kirkwood-Dirac-positive states.
In \cite{langrenezetal2024a}, 
 such witnesses are constructed and the link between Kirkwood-Dirac positivity and uncertainty for pure states that were shown to exist~\cite{DeB23}, is extended to mixed states. 
 
 These results indicate that, given two arbitrary bases $(|a_i\rangle)_i, (|b_j\rangle)_j$, it is not straightforward to determine the basic features of $\EcalKDCpu$ and of $\EcalKDC$. When extra mathematical structures are present, this task can be expected to be more accessible.  For example,  in any dimension, the Kirkwood-Dirac positive pure states associated to the discrete Fourier transform form a finite family that is easily described~\cite{XU2024,MaOzPr04}. 
 It is in addition shown in~\cite{langrenez2023characterizing2} that, in that case, and provided the dimension is a prime number, Eq.\eqref{eq:simple} does hold.  When the dimension  $d=p^{2}$ with $p$ a prime number, it is proven in~\cite{Yang_2024} that 
 \begin{equation}\label{eq:almostsimple}
    \convAB\subsetneq \conv{\EcalKDCpu} = \EcalKDC.
\end{equation}
In this situation, as mentioned above, the pure Kirkwood-Dirac positive states are explicitly known, and, as in Eq.\eqref{eq:simple}, the set of Kirkwood-Dirac-positive states is a polytope.

 In this article, we consider the family of Kirkwood-Dirac quasiprobability representations that are naturally associated to the Fourier transform on finite abelian groups $G$, as follows. We denote by $\widehat{G}$ the unitary (or Pontyagrin) dual of $G$ and by $\H$
the Hilbert space $L^{2}(G)$, on which we consider the two natural orthonormal bases given by
\begin{equation}\label{eq:BasisStates}
 \forall g'\in G, a_{g}(g') =  \delta_{g}(g') \text{ and } b_{\chi}(g') = \frac{1}{\left|G\right|^{\frac{1}{2}}}\chi(g')
\end{equation}
where $g\in G$ and $\chi\in\widehat{G}$. We then define the Kirkwood-Dirac representation of the abelian group $G$ to be the one associated to these two bases. In general, the lower and upper symbols $Q$ and $\widetilde{Q}$ are defined on $G\times\hat G$ and they intertwine the left action of $G\times \hat G$ with the natural irreducible representation of the Heisenberg group of $G$ on $\Hcal=L^2(G)$ (Section~\ref{sec:KDfinitegroups}).

When $G=\Z_{d}$, for example, the transition matrix between the above bases is the discrete Fourier transform matrix; the corresponding representation of the Heisenberg group plays an important role in quantum information theory and in quantum chaos, notably.  The representations associated to the group  $\Z_d^N$ arise in the study of systems of $N$ qubits.   

We identify for these Kirkwood-Dirac representations the complete set $\EcalKDCpu$ of pure Kirkwood-Dirac-positive states, which we show to be labeled by the elements of the subgroups of $G\times \hat G$ of the form $H\times H^{\perp}$, where $ H^\perp$ is the annihilator of $H$ (Section~\ref{sec:charKDpurepos}). We also characterize the real vector space of  Kirkwood-Dirac-real observables $\VR$ (Section~\ref{sec:IdKDreal}).   We further derive a necessary and sufficient condition that allows one to check whether $\EcalKDC = \conv{\EcalKDCpu}$ (Theorem~\ref{thm:KDpositivestates} in Section~\ref{sec:CharKDpos}).  Applying these results, we then show (Theorem~\ref{thm:primepowers}) that, when $G=\Z_{d}$ and $d$ is any prime power ($d=p^k, k > 1$), Eq.\eqref{eq:almostsimple} holds, thereby generalizing the results of~\cite{langrenez2023characterizing2} for $d=p$  and~\cite{Yang_2024} for $d=p^2$ to arbitrary prime power dimensions.

Finally, in Section~\ref{sec:ExamplesFT} we give two examples of abelian groups ($G=\Z_6$ and $\Z_2\times \Z_2$) for which Eq.\eqref{eq:almostsimple} does not hold, thus disproving conjectures in~\cite{Yang_2024,xu2024a}. There then exist KD-positive density matrices $\rho$ that have the property that they cannot be written as convex mixtures of pure KD-positive states. Such states have been used in~\cite{thio2024} to construct a fully classical experiment that certifies the existence of experiments that do not admit classical explanations, by which is meant that they cannot be modeled by a noncontextual hidden variable model. 

\section{The Kirkwood-Dirac representation of quantum mechanics over finite abelian groups}
\label{sec:KDfinitegroups}
In this section,  we introduce the Kirkwood-Dirac (KD) representation associated to the Fourier transform on a finite abelian group. 

\subsection{Fourier transform and Heisenberg group}\label{sec:Settings}
We recall some basic facts about the Fourier transform on a finite abelian group and about the representation of the associated Heisenberg group.

Let $(G,+)$ be a finite abelian group, and denote by $\hat G$ its unitary (or Pontryagin) dual. We write $\Hcal=L^2(G)$ and $\widehat\Hcal=L^2(\widehat G)$ and introduce the Fourier transform by
\begin{equation}
\widehat\ \ \colon \psi\in\Hcal\mapsto\widehat\psi\in\widehat\Hcal
\end{equation}
with
\begin{equation}
\widehat\psi(\chi)=\frac1{|G|^{\frac{1}{2}}}\sum_{g\in G}\psi(g)\overline{\chi(g)}.
\end{equation}
The inverse Fourier transform is then given by
\begin{equation}
\widecheck\ \ \colon \eta\in \widehat\Hcal\mapsto \widecheck \eta\in \Hcal
\end{equation}
with
\begin{equation}
\widecheck \eta(g)=\frac{1}{\left|G\right|^{\frac{1}{2}}}\sum_{\chi\in \hat{G}}\chi(g)\eta(\chi).
\end{equation}
Thus $\Hcal$ has two natural orthonormal bases given by $(a_{g})_{g\in G}$ and $(b_{\chi})_{\chi\in \widehat{G}}$ defined in Eq.\eqref{eq:BasisStates}.
Here, the scalar product on $\Hcal$ is
$$
\bracket{\varphi}{\psi} = \sum_{g\in G} \overline{\varphi(g)}\psi(g)=\sum_{g\in G} \bracket{\varphi}{a_g}\bracket{a_g}{ \psi},
$$
and similarly on $\widehat\Hcal$. Note that, with this notation,
\begin{equation}
\psi(g)=\bracket{a_g}{\psi},\quad \widehat\psi(\chi)=\bracket{b_\chi}{\psi}.
\end{equation}
We will denote by $\norm{\psi} = \left(\bracket{\psi}{\psi}\right)^{1/2}$ the associated norm. One may note that the transition matrix between the two bases above is a complex Hadamard matrix:
\begin{equation}
\bracket{a_g}{b_\chi} =\frac1{|G|^{1/2}}\chi(g).
\end{equation}
Such bases are said to be mutually unbiased as they satisfy for all $(g,\chi)\in G\times \hat{G}, \left|\bracket{a_g}{b_\chi}\right| = \card{G}^{-\frac{1}{2}}$. They play an important role in various aspects of quantum information theory.

We also recall some basic facts about the Heisenberg group $(H(G),\cdot)$ and its irreducible representation on $\Hcal$. One defines
\begin{equation}
    H(G)=G\times \hat{G}\times \mcl{S}^1,
\end{equation}
where $\mcl{S}^{1} = \left\{ z\in \C, \left|z\right|=1\right\}$, with the  group operation
\begin{equation}
    (g,\chi,z)\cdot (g',\chi',z') = (g+g',\chi\chi', z z'\overline{\chi'(g)}).
\end{equation}
For each $g\in G, \chi\in \hat G$, we introduce, for all $\psi\in\Hcal$
\begin{equation}
T_{g}\psi(x) = \psi(x-g) \;\text{ and }\;
 M_{\chi}\psi(x) = \chi(x)\psi(x).
\end{equation}
Note, for later purposes, that 
\begin{equation}
    M_\chi T_g=\chi(g)T_gM_\chi.
\end{equation}
We also point out that
\begin{equation}
T_g b_\chi=\overline{\chi(g)}b_\chi \mathrm{\ and \ } M_\chi a_g=\chi(g)a_g.
\end{equation}
In other words, the ``direct'' basis $a_g$ diagonalises the multiplication operators $M_\chi$ whereas the ``dual'' basis $b_\chi$ diagonalises the translation operators $T_g$. 
A unitary irreducible representation of the Heisenberg group $(H(G),\cdot)$ on $L^{2}(G)$ is given by
\begin{equation}\label{eq:Rep1}
U:(g,\chi,z)\in H(G)\to U(g,\chi, z)=zM_{\chi}T_{g}\in U(\Hcal),
\end{equation}
where $U(\Hcal)$ is the unitary group on $\Hcal$.

Note that the group $(H(G),\star)$, with the group operation
\begin{equation}
    (g,\chi,z) \star (g',\chi',z') = (g+g',\chi\chi', z z'\chi(g'))
\end{equation}
is isomorphic to $(H(G),\cdot)$ via the isomorphism
\begin{equation}
    I: (g,\chi,z)\in (H(G), \cdot)  \to   (g,\chi,z\chi(g))\in (H(G), \star).
\end{equation}
As a result, 
\[
U_{\star}:(g,\chi,z)\in H(G)\to zT_{g}M_{\chi}\in U(\Hcal),
\]
defines a unitary irreducible representation of $(H(G),\star)$, since 
$$
U_\star(g,\chi,z)=U(I^{-1}(g,\chi,z)).
$$
For the remainder of the paper, we will use Heisenberg group in the form $(H(G),\cdot)$, with the associated representation $U$ given in Eq.\eqref{eq:Rep1}. This corresponds to a choice of ordering, as can be seen in Eq.\eqref{eq:ordering2} and Eq.\eqref{eq:WHactKD} below:  ``multiplication operators to the left of translation operators''. Using $(H(G),\star)$ and $U_{\star}$, one obtains the opposite ordering. The results of this paper go through unaltered with that choice.  

\subsection{The Kirkwood-Dirac representation for a finite abelian group}\label{sec:KDsymbol}
We now define the KD representation associated to the Fourier transform of a finite abelian group, following Eq.\eqref{eq:frames}-\eqref{eq:reconstruction}. For that purpose, we need to introduce a frame $S$ and its dual frame $\widetilde{S}$ on $L^2(G)$. We proceed as follows.

We define
\begin{eqnarray}\label{eq:frame} K_\chi=|b_\chi\rangle\langle b_\chi|,\quad
 L_g=|a_g\rangle\langle a_g|, \mathrm{ \ and \ }  S(g, \chi)= L_g K_\chi.
\end{eqnarray}
Note that the $S(g,\chi)$ form an orthogonal basis of $\mathcal L(\mathcal H)$ as
\begin{equation}
    \Tr S(g,\chi)^\dagger S(g',\chi')=|G|^{-1}\delta_{g,g'}\delta_{\chi,\chi'},
\end{equation}
and that the $\widetilde{S}(g,\chi)=|G|S(g, \chi)$ form its dual basis. 
\begin{definition}
The lower Kirkwood-Dirac symbol of an operator $C\in\mathcal L(\H)$   is the function $Q[C]$ on $G\times \hat G$ defined by
\begin{equation}
Q[C](g,\chi)=\bracket{b_\chi}{a_g}\bra{a_g}C\cket{ b_\chi}=\Tr(S(g,\chi)^\dagger C). 
\end{equation}

The upper Kirkwood-Dirac symbol  of an operator $C\in\mathcal L(\H)$ is the function $\widetilde{Q}[C]$ on $G\times \hat G$ defined by 
\begin{equation}
\widetilde{Q}[C](g,\chi)=\frac{\bra{a_g}C \cket{b_\chi}}{\bracket{a_g} {b_\chi}}=\Tr(\widetilde{S}(g,\chi)^\dagger C). 
\end{equation}
\end{definition}
Note that the maps
\begin{equation}\label{eq:Qmapdef}
Q: C\in \mathcal L(\Hcal)\to Q[C] \in L^2(G\times\widehat G),\quad
\widetilde{Q}: C\in \mathcal L(\Hcal)\to \widetilde{Q}[C] \in L^2(G\times\widehat G) 
\end{equation}
are linear.
One finds that
\begin{equation}
  \widetilde{Q}[\bbone_{\mathcal H}](g,\chi)=1,\quad   \Tr C=\sum_{g,\chi} Q[C](g,\chi),
\end{equation}
and, for the special case of a rank $1$ operator denoted by $C = \cket{\psi}\bra{\psi} $, we have:
\begin{equation}\label{eq:KDsymbPure}
    Q[\psi](g,\chi) = \frac{1}{\card{G}^{\frac{1}{2}}}\overline{\chi(g)}\psi(g)\overline{\widehat{\psi}(\chi)} \mathrm{\ and \ } \widetilde{Q}[\psi](g,\chi) = \card{G}^{\frac{1}{2}}\overline{\chi(g)}\psi(g)\overline{\widehat{\psi}(\chi)},
\end{equation}
where $Q[\psi]$ is a shorthand for $Q[\cket{\psi}\bra{\psi}]$.
The inverses of $Q$ and $\widetilde{Q}$ are readily computed. Given $f\in L^2(G\times \hat G)$, one has
\begin{equation}
\widetilde{Q}^{-1}[f]=\sum_{g,\chi} \bracket{a_g}{b_\chi} f(g,\chi) \cket{a_g}\bra{b_\chi},
\end{equation}
and
\begin{equation}
Q^{-1}[f]=\card{G} \sum_{g,\chi} \bracket{a_g}{b_\chi} f(g,\chi) \cket{a_g}\bra{b_\chi}.
\end{equation}
One further readily checks that Eq.\eqref{eq:overlap} is satisfied: for all $C, C'\in\mathcal L(\Hcal)$,
\begin{equation}\label{eq:Overlap}
\Tr C^{\dagger}C'=\sum_{g, \chi} \overline{\widetilde{Q}[C](g,\chi)} Q[C'](g,\chi).
\end{equation}

It is instructive to consider some special cases of the correspondence between operators and their KD symbols. First, let $v:G\to\C$ and $w:\widehat G\to \C$, then, if 
\begin{equation}
V=\sum_g v(g) |a_g\rangle\langle a_g|,\quad W=\sum_{\chi} w(\chi)|b_\chi\rangle\langle b_\chi|,
\end{equation}
then
\begin{equation}\label{eq:ordering1}
   \widetilde{Q}[VW](g,\chi)=v(g)w(\chi)=w(\chi)v(g),
\end{equation}
so that 
$\widetilde{Q}[V]=v$, $\widetilde{Q}[W]=w$, and 
\begin{equation}\label{eq:ordering2}
\widetilde{Q}[VW]=\widetilde{Q}[V]\widetilde{Q}[W]=\widetilde{Q}[W]\widetilde{Q}[V]\not=\widetilde{Q}[WV].
\end{equation}
This corresponds to ``left ordering'': operators diagonal in the direct basis have symbols that are functions depending on $g\in G$ only and go to the left of operators diagonal in the dual basis, and that have symbols depending only on the dual variable $\chi\in\hat G$. 

We also have that
\begin{equation}
    \widetilde{Q}[M_{\chi'}](g,\chi)=\chi'(g), \quad \widetilde{Q}[T_{g'}](g,\chi)=\overline{\chi(g')}.
\end{equation}

Finally, we note that the Heisenberg group acts on the set of self-adjoint operators. We want to know how this action acts on $Q$. Let $F$ be a self-adjoint operator, for $(z_{0},g_0,\chi_0)\in \mcl{S}^1\times G\times\hat{G}$, we have that:
\begin{eqnarray}\label{eq:WHactKD}
    \forall (g,\chi)\in G\times\hat{G}, Q[U(g_0,\chi_0,z_0)F U^{\dagger}(g_0,\chi_0,z_0)](g,\chi) = Q[F](g-g_0,\chi\overline{\chi_{0}}).
\end{eqnarray}
For later purposes, we note that Eq.\eqref{eq:WHactKD} implies that the action of the Heisenberg group on self-adjoint operators induces  translations of $Q$ by elements of $G\times\hat G$. 

Our interest is to characterise the set of quantum states that have a positive KD distribution, which means that the KD distribution $Q[\rho]$ of a state $\rho$ has positive values everywhere. In the next section, we identify the set of pure KD-positive states, which fully characterises the set $\conv{\EcalKDCpu}$.

\section{Characterisation of the Kirkwood-Dirac-positive pure states for finite abelian groups}
\label{sec:charKDpurepos}
The aim of this section is to characterise the set of KD-positive pure states. This will be done in Theorem~\ref{thm:KDPure}. For that purpose, we first identify a particular set of KD-positive states and then prove that any KD-positive pure state is of this specific form.

Let $H\subseteq G$ be a subgroup of~$G$.  We define the normalized characteristic function of $H$ as 
\begin{equation}\label{def:KDPure1}
\psi^{H} = \frac{1}{\left|H\right|^{\frac{1}{2}}}\mathbbm{1}_{H}.
\end{equation}
Since
\begin{equation}
    \widehat{\psi^H}=\frac{1}{\left|H^\perp\right|^{\frac{1}{2}}}\mathbbm{1}_{H^\perp},
\end{equation}
and $|G|=|H||H^\perp|$,
the
 KD distribution associated to this state is given by
\[
 \forall (g,\chi)\in G\times\hat{G}, \ Q\left[\psi^{H}\right](g,\chi)= \frac{1}{\left|G\right|}\overline{\chi(g)}\mathbbm{1}_{H}(g)\mathbbm{1}_{H^{\perp}}(\chi) = \frac{1}{\left|G\right|}\mathbbm{1}_{H}(g)\mathbbm{1}_{H^{\perp}}(\chi),
\]
where $H^{\perp}$ is the annihilator of $H$ defined as $H^{\perp} = \left\{ \chi \in \hat{G}, \forall h \in H, \chi(h) = 1 \right\}$. Thus, $\psi^{H}$ is a KD-positive state.

Furthermore, according to Eq.\eqref{eq:WHactKD}, the action of the Heisenberg group on a KD-positive pure state preserves KD-positivity. Thus, for any $(g_{0},\chi_{0})\in G/H \times \hat{G}/H^{\perp}$, we have that 
\begin{equation}\label{def:KDPure2}
    \psi^{H}_{g_{0},\chi_{0}}:= M_{\chi_0}T_{g_{0}}\psi^{H}
\end{equation} 
is a pure KD-positive state. Note that if $H$ is reduced to $\{0\}$, then we obtain the pure states $(a_g)_{g\in G}$ and if $H=G$, we obtain the pure states $(b_{\chi})_{\chi\in\hat{G}}$. The KD distributions of the $\psi^{H}_{g_{0},\chi_{0}}$ are:
\begin{equation}\label{eq:KDPosPure}
 \forall (g,\chi)\in G\times\hat{G}, \eta^{H}_{g_0,\chi_0}(g,\chi): =Q\left[\psi^{H}_{g_{0},\chi_{0}}\right] (g,\chi)= \frac{1}{\left|G\right|}\mathbbm{1}_{H}(g-g_0)\mathbbm{1}_{H^{\perp}}(\chi\chi_{0}^{-1}).
\end{equation}

The following Theorem ensures that any  KD-positive pure state is of the form given in Eq.\eqref{eq:KDPosPure}.
\begin{Theorem}\label{thm:KDPure}
     Let $G$ be a finite abelian group. Then $\psi$ is a pure KD-positive state if and only if there exists a subgroup $H$ and $(g_0,\chi_0)\in G/H \times\hat{G}/H^{\perp}$ such that $\psi = \psi^{H}_{g_0,\chi_0}$.
 \end{Theorem}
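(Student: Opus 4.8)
The plan is to turn KD-positivity, combined with the marginal identities of Eq.\eqref{eq:quasiprob}, into rigid constraints on the moduli and phases of $\psi$ and $\widehat\psi$, and then to read off the subgroup structure by a counting argument. Write $S=\{g\in G:\psi(g)\neq 0\}$ and $\widehat S=\{\chi\in\widehat G:\widehat\psi(\chi)\neq 0\}$. By Eq.\eqref{eq:KDsymbPure}, $Q[\psi](g,\chi)=\card{G}^{-1/2}\overline{\chi(g)}\psi(g)\overline{\widehat\psi(\chi)}$, so positivity forces each term to equal its own modulus, i.e.\ $Q[\psi](g,\chi)=\card{G}^{-1/2}\card{\psi(g)}\card{\widehat\psi(\chi)}$. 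Summing over $\chi$ and using $\sum_\chi Q[\psi](g,\chi)=\card{\bracket{a_g}{\psi}}^2=\card{\psi(g)}^2$ from Eq.\eqref{eq:quasiprob} shows that $\card{\psi(g)}=\card{G}^{-1/2}\sum_\chi\card{\widehat\psi(\chi)}$ is independent of $g\in S$; by symmetry $\card{\widehat\psi(\chi)}$ is constant on $\widehat S$. Inserting the normalisation $\norm{\psi}=1$ into these two constants yields the sharp uncertainty relation $\card{S}\card{\widehat S}=\card{G}$. This is the first key step.

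Next I would exploit the Heisenberg action of Eq.\eqref{eq:WHactKD}, which maps KD-positive pure states to KD-positive pure states and realises precisely the translations and modulations appearing in $\psi^{H}_{g_0,\chi_0}$. Translating by a suitable $-g_0$ moves a point of $S$ to $0$ without altering $\widehat S$, modulating by a suitable $\chi_0^{-1}$ moves a point of $\widehat S$ to the trivial character without altering $S$, and a global phase (which leaves $\cket{\psi}\bra{\psi}$ unchanged) lets us assume $0\in S$ with $\psi(0)>0$ and with the trivial character in $\widehat S$. The phase content of positivity can now be extracted term by term: positivity at $g=0$ forces $\widehat\psi\geq 0$ on $\widehat S$, positivity at the trivial character then forces $\psi\geq 0$ on $S$, and positivity at a general pair $(g,\chi)\in S\times\widehat S$ finally collapses (since $\psi(g),\widehat\psi(\chi)>0$ and $\card{\chi(g)}=1$) to the identity $\chi(g)=1$.

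It then remains to convert $\chi(g)=1$ on $S\times\widehat S$ into the subgroup structure. Set $H=\langle S\rangle$, the subgroup generated by $S$. Then $\widehat S\subseteq H^{\perp}$ and $S\subseteq H$, so $\card{\widehat S}\leq\card{H^{\perp}}=\card{G}/\card{H}\leq\card{G}/\card{S}$; but the sharp uncertainty relation gives $\card{\widehat S}=\card{G}/\card{S}$, forcing every inequality to be an equality. In particular $S=H$ (the support is already the subgroup $H$, not merely a generating set) and $\widehat S=H^{\perp}$, whence $\psi=\card{H}^{-1/2}\mathbbm{1}_H=\psi^{H}$; undoing the reduction gives $\psi=\psi^{H}_{g_0,\chi_0}$. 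I expect this closing counting step to be the main obstacle: the delicate point is that one must rule out $S$ being a proper generating subset of $H$, and it is exactly the equality $\card{S}\card{\widehat S}=\card{G}$ --- itself a consequence of positivity and the marginals --- that forces $S$ to coincide with $\langle S\rangle$. (Alternatively, once this equality is in hand one could quote the equality case of the Donoho--Stark uncertainty principle on finite abelian groups, but the phase analysis above keeps the argument self-contained.)
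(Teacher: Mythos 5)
Your proof is correct, and its skeleton matches the paper's (supports $S$, $\widehat S$, Heisenberg normalization, the deduction $\chi(g)=1$ on $S\times\widehat S$, annihilator counting), but the quantitative core is genuinely different, and in a way worth noting. The paper obtains the two inequalities separately: $\card{S}\card{\widehat{S}}\leq\card{G}$ from the containment $\widehat S\subseteq S^{\perp}$, and $\card{S}\card{\widehat S}\geq\card{G}$ from the total mass $\sum_{g,\chi}Q[\psi](g,\chi)=1$ together with two applications of the Cauchy--Schwarz inequality; it then needs the \emph{equality case} of Cauchy--Schwarz at the very end to conclude that $\lvert\psi\rvert$ is constant on $S$ and hence that $\psi=\psi^{S}$. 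You instead sum the pointwise identity $Q[\psi](g,\chi)=\card{G}^{-1/2}\lvert\psi(g)\rvert\,\lvert\widehat\psi(\chi)\rvert$ (valid by positivity) over $\chi$ and over $g$, and invoke the marginal identities of Eq.\eqref{eq:quasiprob}: this yields constancy of the moduli on the supports immediately, and normalization then gives the sharp relation $\card{S}\card{\widehat S}=\card{G}$ in one stroke, with no Cauchy--Schwarz and no equality case to check --- a genuine simplification. Your phase analysis (positivity at $g=0$ forces $\widehat\psi\geq 0$, positivity at the trivial character then forces $\psi\geq 0$, whence $\chi(g)=1$ on $S\times\widehat S$) is the same in substance as the paper's ratio argument in Eqs.\eqref{eq:KDpurepos3}--\eqref{eq:KDpurepos5}, just organized by fixing signs rather than dividing positive quantities; and your closing count ($\widehat S\subseteq\langle S\rangle^{\perp}$ plus the equality forces $S=\langle S\rangle$ and $\widehat S=S^{\perp}$) runs the paper's argument in reverse order, with the containment certifying the equality rather than supplying one of the two inequalities. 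The only thing missing is the easy reverse implication of the ``if and only if'': as in the paper, one sentence citing the computation of $Q\left[\psi^{H}_{g_{0},\chi_{0}}\right]$ in Eq.\eqref{eq:KDPosPure} settles it.
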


\begin{proof}
    As $\psi^{H}_{g_0,\chi_0}$ is a KD-positive pure state for all subgroups $H$ and all $(g_0,\chi_0)\in G/H \times\hat{G}/H^{\perp}$, we only need to prove the direct implication. 
    
    Let $\psi$ be a KD-positive state. We denote by 
    \begin{equation}
        S=\mathrm{supp}(\psi):= \left\{g\in G, \psi(g) \neq 0\right\} \mathrm{\ and \ } S' = \mathrm{supp}(\hat{\psi}):= \left\{\chi\in \hat{G}, \hat{\psi}(\chi) \neq 0\right\}.
    \end{equation}
    Thus, for all $g\in S, \chi \in S'$, as $\psi$ is KD-positive, we have that:
    \begin{equation}\label{eq:KDpurepos1}
        \overline{\chi(g)}\psi(g)\overline{\hat{\psi}(\chi)} >0.
    \end{equation}
    We obtain that, for all $(g_1,g_2)\in S^2, \chi \in S'$:
    \begin{equation}\label{eq:KDpurepos2}
        \left(\overline{\chi(g_1)}\psi(g_1)\overline{\hat{\psi}(\chi)}\right)^{-1} >0 \mathrm{ \ and \ } \overline{\chi(g_2)}\psi(g_2)\overline{\hat{\psi}(\chi)} > 0.
    \end{equation}
    Thus, by multiplying the two inequalities in Eq.\eqref{eq:KDpurepos2}, for all $(g_1,g_2)\in S^2, \chi \in S'$
    \begin{equation}\label{eq:KDpurepos3}
        \chi(g_1-g_2)\psi(g_1)^{-1}\psi(g_2) > 0.
    \end{equation}
    Furthermore, by using Eq.\eqref{eq:KDpurepos3}, we obtain that, for all $(g_1,g_2)\in S^2, (\chi_1,\chi_2) \in S'^{2}$:
    \begin{equation}\label{eq:KDpurepos4}
        \chi_{1}(g_1-g_2)\overline{\chi_{2}(g_1-g_2)} > 0.
    \end{equation}
    As the action of the Heisenberg group preserves KD-positivity, we can suppose that $0\in S$ and $1\in S'$. We apply Eq.\eqref{eq:KDpurepos4} with $g_2=0$ and $\chi_2=1$ to obtain:
    \begin{equation}\label{eq:KDpurepos5}
        \forall (g,\chi) \in S\times S', \chi(g) > 0.
    \end{equation}
    As for all $(g,\chi) \in S\times S', \chi(g)\in\mcl{S}^{1}$, we have, with Eq.\eqref{eq:KDpurepos5}, that $ \chi(g)=1$. This implies that $S'\subseteq S^{\perp}$. Consequently, as $\left|S^{\perp}\right| \leqslant  \frac{\left|G\right|}{\left|S\right|}$, we obtain that 
    \begin{equation}\label{eq:KDpurepos6}
        \card{S}\card{S'} \leqslant \card{G}.
    \end{equation}
    Moreover, we have that:
    \begin{equation}
             \card{G}=\card{G}\sum_{(g,\chi)\in G\times\hat{G}} \left|Q[\psi](g,\chi)\right| = \card{G}^{\frac{1}{2}}\sum_{(g,\chi)\in G\times\hat{G}} \left|\psi(g)\hat{\psi}(\chi)\right| = \card{G}^{\frac{1}{2}}\sum_{g\in S} \left|\psi(g)\right| \sum_{\chi\in S'}\left|\hat{\psi}(\chi)\right|.
    \end{equation}
    We apply the Cauchy-Schwarz inequality twice to obtain that
    \begin{equation}
        \card{G}\leqslant \card{G}^{\frac{1}{2}}\left(\sum_{g\in S} \left|\psi(g)\right|^2\right)^{\frac{1}{2}}\card{S}^{\frac{1}{2}}\left(\sum_{\chi\in S'}\left|\hat{\psi}(\chi)\right|^2\right)^{\frac{1}{2}}\card{S'}^{\frac{1}{2}} = \card{G}^{\frac{1}{2}}\card{S}^{\frac{1}{2}}\card{S'}^{\frac{1}{2}} 
    \end{equation}
    Squaring this inequality and reorganising the terms, we have that:
    \begin{equation}\label{eq:KDpurepos7}
        \card{G} \leqslant \card{S}\card{S'}.
    \end{equation}
    Eq.\eqref{eq:KDpurepos6} and Eq.\eqref{eq:KDpurepos7} together imply that $\card{G} = \card{S}\card{S'}$ and thus, $\card{S'} = \card{S^{\perp}}$. Thus $S'=S^{\perp}$ and $S'$ is a subgroup of $\hat{G}$. Similarly, $S$ is a subgroup of $G$. And by equality case in the Cauchy-Schwarz inequality, we obtain that $\psi = \psi^{S}$, meaning that $\psi$ is in the Heisenberg orbit of a normalized characteristic function of a subgroup.  
\end{proof}

\section{Identification of Kirkwood-Dirac-real self-adjoint operators for finite abelian groups}
\label{sec:IdKDreal}
 In this section, we  characterise $\VR$, the set of self-adjoint operators that are KD-real:
\begin{Theorem}\label{thm:VKDr}
\begin{equation}\label{eq:VKDr}
\VR = \mathrm{span}_{\R}(\EcalKDCpu).
\end{equation}
\end{Theorem}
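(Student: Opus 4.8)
The plan is to prove the two inclusions separately. The inclusion $\spanR(\EcalKDCpu)\subseteq\VR$ is immediate: by Theorem~\ref{thm:KDPure} every element of $\EcalKDCpu$ is a density matrix $\psi^{H}_{g_0,\chi_0}$ whose symbol $\eta^{H}_{g_0,\chi_0}$ in Eq.\eqref{eq:KDPosPure} is real (indeed nonnegative), so each pure KD-positive state lies in the real vector space $\VR$, and hence so does their real span. All the work is in the reverse inclusion, for which I would first reformulate KD-reality in terms of matrix entries in the $a$-basis.

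First I would encode a self-adjoint operator $C$ by the function $D(g,h)=\bra{a_g}C\cket{a_{g+h}}$ on $G\times G$. Expanding $\cket{b_\chi}$ in the $a$-basis and using $\bracket{a_g}{b_\chi}=\card{G}^{-1/2}\chi(g)$ gives
\[
Q[C](g,\chi)=\frac{1}{\card{G}}\sum_{h\in G}D(g,h)\chi(h),
\]
so that for fixed $g$ the function $Q[C](g,\cdot)$ is the Fourier transform of $h\mapsto D(g,h)$. Since the characters form a basis of $L^2(G)$, reality of $Q[C]$ is equivalent to $D(g,h)=\overline{D(g,-h)}$ for all $g,h$. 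Combining this with the self-adjointness relation $D(g,h)=\overline{D(g+h,-h)}$ (which follows from $C=C^\dagger$ applied to the pair $g,g+h$), I obtain the characterisation: $C\in\VR$ if and only if
\[
D(g,h)=D(g+h,h)\quad\text{and}\quad D(g,h)=\overline{D(g,-h)}\qquad(g,h\in G).
\]
In words, for each $h$ the slice $D(\cdot,h)$ descends to a function on $G/\langle h\rangle$, and the slices at $h$ and $-h$ are complex conjugates. Consequently $\VR$ is spanned over $\R$ by the reality-symmetric ``elementary'' operators supported, in the variable $h$, on a single pair $\{h_0,-h_0\}$ and, in the variable $g$, on a single coset of $\langle h_0\rangle$.

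It then remains to realise each such elementary operator as a real combination of pure KD-positive states. I would take $H=\langle h_0\rangle$, fix a coset $g_0+H$, and compute from Eq.\eqref{eq:KDsymbPure} that the symbol of $\psi^{H}_{g_0,\chi_0}=M_{\chi_0}T_{g_0}\psi^{H}$ corresponds to $D^{H}_{g_0,\chi_0}(g,h)=\card{H}^{-1}\,\overline{\chi_0(h)}\,\mathbbm{1}_{g_0+H}(g)\,\mathbbm{1}_{H}(h)$. As $\chi_0$ runs over $\hat G/H^{\perp}\cong\hat H$, the restrictions $\overline{\chi_0}|_H$ run over all characters of the cyclic group $H$; and by Fourier inversion the real span of the characters of $H$ is exactly the space of functions $f$ on $H$ with $f(h)=\overline{f(-h)}$ (this is the condition that the character coefficients of $f$ be real). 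Hence, for fixed $g_0$, the real span of $\{D^{H}_{g_0,\chi_0}\}_{\chi_0}$ consists of all $\mathbbm{1}_{g_0+H}(g)\,f(h)$ with $f$ reality-symmetric on $H=\langle h_0\rangle$, which in particular contains the two elementary generators attached to $\{h_0,-h_0\}$ and the coset $g_0+H$. Letting $h_0$ and $g_0$ vary then yields $\VR\subseteq\spanR(\EcalKDCpu)$, completing the proof.

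The main obstacle I anticipate is the passage from the description of $\VR$, which is organised by the cyclic subgroups $\langle h\rangle$ generated by the off-diagonal shift $h$, to the description of $\EcalKDCpu$, which is organised by \emph{arbitrary} subgroups $H$; the Fourier computation on $H=\langle h_0\rangle$ is precisely what reconciles the two, and verifying that the real span of the characters of $H$ equals the reality-symmetric functions is the crux of the argument. As a consistency check one can compare real dimensions: the characterisation above gives $\dim_\R\VR=\sum_{h\in G}\card{G}/\mathrm{ord}(h)$, which for $G=\Z_p$ equals $2p-1$, matching $\dim_\R\spanR(\Acal\cup\Bcal)$ in that case.
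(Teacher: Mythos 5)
Your proof is correct and is essentially the paper's own argument transported to the operator side: your function $D(g,h)=\bra{a_g}C\cket{a_{g+h}}$ equals, up to normalization, the inverse Fourier transform $\widecheck{\mathrm{Q}_g}(-h)$ used in the proof of Lemma~\ref{lem:RealKD}, so your two conditions characterizing $\VR$ are exactly Eq.~\eqref{eq:EquivSAR}. Likewise, your decomposition into elementary pieces indexed by pairs $\{h_0,-h_0\}$ and cosets of $\langle h_0\rangle$, matched to the states $\psi^{\langle h_0\rangle}_{g_0,\chi_0}$ via Fourier inversion on the character group of $H=\langle h_0\rangle$, is a coset-refined version of the paper's decomposition of $Q[C]$ into $H\times H^{\perp}$-periodic real functions over cyclic subgroups, combined with the expansion Eq.~\eqref{eq:PH1} in the basis $\eta^{H}_{g_0,\chi_0}$.
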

As a first step, using that $Q$ in Eq.\eqref{eq:Qmapdef} is bijective, we characterise in Lemma~\ref{lem:RealKD} the real functions on $G\times\widehat{G}$ for which the associated operator is self-adjoint.
A straightforward computation shows that the complex functions $\mathrm{Q}: G\times \hat{G} \to \C$, for which the associated operator $Q^{-1}[\mathrm{Q}]$ is self-adjoint, are those that satisfy
\begin{equation}\label{eq:condSAC}
     \forall (g,g')\in G^2, \quad \sum_{\chi\in \hat{G}} \chi(g-g')\left(\overline{\mathrm{Q}(g,\chi)} - \mathrm{Q}(g',\chi)\right)= 0.
\end{equation}
For real functions $\mathrm{Q}:G\times \hat{G} \to \R$, Eq.\eqref{eq:condSAC} becomes
\begin{equation}\label{eq:condSAR}
     \forall (g,g')\in G^2, \quad \sum_{\chi\in \hat{G}} \chi(g-g')\left(\mathrm{Q}(g,\chi) - \mathrm{Q}(g',\chi)\right)= 0.
\end{equation}
The following lemma characterises the functions satisfying Eq.\eqref{eq:condSAR}.

\begin{Lemma}\label{lem:RealKD}
    A real function $\mathrm{Q}: G\times \hat{G} \to \R$ satisfies Eq.\eqref{eq:condSAR} if and only if it belongs to $\sum_{H \mathrm{ \ subgroup \ of \ } G} P_{H}$ where $P_{H}$ is the set of real-valued functions on $G\times \hat{G}$ that are $H\times H^{\perp}$-periodic.
\end{Lemma}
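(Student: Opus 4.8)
The plan is to prove the two inclusions separately, using the Fourier-analytic structure of the condition in Eq.\eqref{eq:condSAR}.

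\medskip

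The easy direction is to show that every $H \times H^{\perp}$-periodic function satisfies Eq.\eqref{eq:condSAR}, and hence that $\sum_{H} P_{H}$ is contained in the solution space. For this I would fix a subgroup $H$ and a function $\mathrm{Q}$ that is $H \times H^{\perp}$-periodic, and compute the inner sum $\sum_{\chi} \chi(g-g')\bigl(\mathrm{Q}(g,\chi)-\mathrm{Q}(g',\chi)\bigr)$. The key observation is that periodicity in the second variable means $\mathrm{Q}(g,\cdot)$ factors through $\hat{G}/H^{\perp} \cong \hat{H}$, while the character weight $\chi \mapsto \chi(g-g')$, when summed against an $H^{\perp}$-periodic function, localizes $g-g'$ to $H$ by the orthogonality relations for characters of $\hat{G}$. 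Concretely, if $g-g' \notin H$ the whole sum vanishes because a nontrivial character of $\hat{H}$ is being summed to zero; and if $g-g' \in H$, then periodicity in the first variable forces $\mathrm{Q}(g,\chi)=\mathrm{Q}(g',\chi)$, so the bracket vanishes identically. Either way the sum is zero, giving one inclusion. Since the solution set of the linear condition Eq.\eqref{eq:condSAR} is a real vector space, it contains the sum $\sum_{H} P_{H}$.

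\medskip

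The harder direction --- and the main obstacle --- is the reverse inclusion: every real solution of Eq.\eqref{eq:condSAR} lies in $\sum_{H} P_{H}$. Here I would pass to the Fourier side in \emph{both} variables, writing $\mathrm{Q}$ via its Fourier coefficients $\widehat{\mathrm{Q}}(\gamma, h)$ indexed by $(\gamma, h) \in \hat{G} \times G$ (using $\hat{\hat{G}} \cong G$). Substituting the Fourier expansion into Eq.\eqref{eq:condSAR} and using character orthogonality to collapse the sum over $\chi$ should convert the condition into a transparent constraint relating the coefficient $\widehat{\mathrm{Q}}(\gamma,h)$ to the support data: I expect to find that $\widehat{\mathrm{Q}}(\gamma,h)$ can be nonzero only when $h$ annihilates $\gamma$ in an appropriate sense, i.e.\ the frequency pair $(\gamma,h)$ must be ``compatible'' with some subgroup. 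The crux is then to organize the surviving Fourier modes according to which subgroup $H$ they are periodic under, and to check that each admissible frequency contributes to $P_{H}$ for a suitable $H$, so that the full solution decomposes as a sum over subgroups. The delicate point is that the same Fourier mode may be periodic under several subgroups, so the decomposition is not unique; but since we only need membership in the \emph{sum} $\sum_{H} P_{H}$, it suffices to assign each mode to \emph{some} subgroup under which it is periodic, which the compatibility constraint guarantees exists.

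\medskip

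An alternative, more structural route that avoids heavy Fourier bookkeeping would be to use the bijection $Q$ and Theorem~\ref{thm:KDPure} directly: the functions $\eta^{H}_{g_0,\chi_0}$ of Eq.\eqref{eq:KDPosPure} are indicator functions of cosets of $H \times H^{\perp}$, and their real linear span over all cosets is exactly $P_{H}$. One could then try to show that the self-adjointness condition Eq.\eqref{eq:condSAR} is equivalent to the statement that $Q^{-1}[\mathrm{Q}]$ lies in $\spanR(\EcalKDCpu)$, which by Theorem~\ref{thm:KDPure} is $\spanR\{\psi^{H}_{g_0,\chi_0}\}$, matching $\sum_{H} P_{H}$ after applying $Q$. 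This would in effect prove Lemma~\ref{lem:RealKD} and Theorem~\ref{thm:VKDr} in tandem. I would nonetheless expect the clean and self-contained argument to be the direct Fourier computation outlined above, reserving the structural identification with $\EcalKDCpu$ for the deduction of Theorem~\ref{thm:VKDr}.
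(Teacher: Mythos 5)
Your proposal is correct and is essentially the paper's own argument: the paper proves the reverse inclusion by the same Fourier-analytic route, transforming only in the $\chi$-variable and showing that Eq.\eqref{eq:condSAR} forces $g\mapsto \widecheck{\mathrm{Q}_{g}}(h)$ to be $\langle h\rangle$-periodic, then assigning that frequency to $H=\langle h\rangle$; your double Fourier expansion is an equivalent formulation in which the constraint comes out, via linear independence of characters, as $\widehat{\mathrm{Q}}(\gamma,h)\left(1-\gamma(h)\right)=0$, so each surviving mode $\gamma(g)\chi(h)$ has $\langle h\rangle\subseteq\ker\gamma$ and is indeed $\langle h\rangle\times\langle h\rangle^{\perp}$-periodic. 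The only detail to add is reality: a single mode is complex-valued, hence not itself in $P_{\langle h\rangle}$, so you must pair $(\gamma,h)$ with $(\gamma^{-1},-h)$ (equivalently, take real parts, which preserves the periodicity) --- precisely the $\pm h$ symmetrization the paper performs.
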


Note that, for a subgroup $H$ of $G$, $(\eta^{H}_{g_0,\chi_0})_{(g_0,\chi_0)\in G/H\times\hat{G}/H^{\perp}}$ is an orthonormal basis of the set $P_{H}$ as, for any $f\in P_{H}$,
\begin{equation}\label{eq:PH1}
    f = \sum_{(g_0,\chi_0)\in G/H\times\hat{G}/H^{\perp}} \card{G}f(g_0,\chi_0)\eta^{H}_{g_0,\chi_0}.
\end{equation}

\begin{proof}
    As for any subgroup $H$ and any $(g_0,\chi_0)\in G/H\times\hat{G}/H^{\perp}, \eta^{H}_{g_0,\chi_0}$ satisfies Eq.\eqref{eq:condSAR}, we only need to prove the direct implication.
    
    Suppose that a real function $\mathrm{Q}:G\times \hat{G} \to \R$ satisfies Eq.\eqref{eq:condSAR}. We consider, for all $g\in G$, the map
    \begin{equation}
        \mathrm{Q}_g: \chi\in \hat{G} \mapsto \mathrm{Q}(g,\chi)\in\R.
    \end{equation}
    Note that, for all $(g,\chi)\in G\times\hat{G}$,
    \begin{equation}\label{eq:sumQ}
     \quad \mathrm{Q}(g,\chi) = \frac{1}{\card{G}^{\frac{1}{2}}}\sum_{g'\in G} \widecheck{\mathrm{Q}_{g}}(g')\overline{\chi(g')} = \frac{1}{2\card{G}^{\frac{1}{2}}}\sum_{g'\in G} \widecheck{\mathrm{Q}_{g}}(g')\overline{\chi(g')} + \widecheck{\mathrm{Q}_{g}}(-g')\overline{\chi(-g')}.
    \end{equation}
    For any real-valued function $\mathrm{Q}$, satisfying Eq.\eqref{eq:condSAR} is equivalent to:
    \begin{equation}\label{eq:EquivSAR}
        \forall (g,g')\in G^{2}, \widecheck{\mathrm{Q}_{g'}}(g-g') = \widecheck{\mathrm{Q}_{g}}(g-g') \mathrm{ \ and \ }  \widecheck{\mathrm{Q}_{g}}(-g') = \overline{\widecheck{\mathrm{Q}_{g}}(g')}.
    \end{equation}
    Note that the first equation in Eq.\eqref{eq:EquivSAR} implies that:
    \begin{equation}
        \forall (g,g')\in G^{2}, \widecheck{\mathrm{Q}_{g}}(g') = \widecheck{\mathrm{Q}_{g-g'}}(g').
    \end{equation} 
    Thus, for a fixed $h\in G$, we have that,
    \begin{equation}
        \forall g\in G,\forall n\in\N, \widecheck{\mathrm{Q}_{g}}(h) = \widecheck{\mathrm{Q}_{g+nh}}(h),
    \end{equation}
    meaning that the function $g\in G \to \widecheck Q_{g}(h)$ is $\langle h \rangle$-periodic where $\langle h \rangle$ is the subgroup of $G$ generated by $h$.
    It implies that the function $(g,\chi)\in G\times \hat{G} \to \widecheck{\mathrm{Q}_{g}}(h)\overline{\chi(h)} + \widecheck{\mathrm{Q}_{g}}(-h)\overline{\chi(-h)}$ is $H\times H^{\perp}$-periodic for $H=\langle h \rangle$ and real-valued. We thus obtain that any real-valued function $\mathrm{Q}$ satisfying Eq.\eqref{eq:condSAR} belongs to $\displaystyle\sum_{H \mathrm{ \ subgroup \ of \ } G} P_{H}$.
\end{proof}

We then have the following characterisation of self-adjoint operators with a real KD distribution.
\begin{Cor}\label{cor:SAO}
    Let $F$ be an operator such that Q[F] is real. The following two statements are equivalent: 
    \begin{itemize}[leftmargin=*,label=\textbullet]
        \item $F$ is a self-adjoint operator
        \item Q[F] belongs to $\displaystyle \sum_{H \mathrm{ \ subgroup \ of \ } G} P_{H}$.
    \end{itemize}
\end{Cor}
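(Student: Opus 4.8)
The plan is to assemble the corollary directly from the three ingredients already in place before the statement: the bijectivity of $Q$ recorded in Eq.\eqref{eq:Qmapdef}, the self-adjointness criterion Eq.\eqref{eq:condSAC}, and Lemma~\ref{lem:RealKD}. First, since $Q$ is a bijection, I would write $F = Q^{-1}[Q[F]]$, so that $F$ is entirely determined by its symbol $\mathrm{Q} := Q[F]$. The straightforward computation announced just before Eq.\eqref{eq:condSAC} shows that $Q^{-1}[\mathrm{Q}]$ is self-adjoint precisely when $\mathrm{Q}$ obeys Eq.\eqref{eq:condSAC}; hence $F$ is self-adjoint if and only if its symbol $\mathrm{Q}$ satisfies Eq.\eqref{eq:condSAC}.

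The second step is to use the standing hypothesis that $\mathrm{Q} = Q[F]$ is real-valued. Under this hypothesis one has $\overline{\mathrm{Q}(g,\chi)} = \mathrm{Q}(g,\chi)$ for every $(g,\chi)\in G\times\hat{G}$, so the bracket $\overline{\mathrm{Q}(g,\chi)} - \mathrm{Q}(g',\chi)$ appearing in Eq.\eqref{eq:condSAC} collapses to $\mathrm{Q}(g,\chi) - \mathrm{Q}(g',\chi)$. Thus, for a real symbol, Eq.\eqref{eq:condSAC} is literally identical to Eq.\eqref{eq:condSAR}, and the equivalence obtained in the first step sharpens to: $F$ is self-adjoint if and only if $\mathrm{Q}$ satisfies Eq.\eqref{eq:condSAR}.

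Finally, I would invoke Lemma~\ref{lem:RealKD}, which characterizes exactly the real functions satisfying Eq.\eqref{eq:condSAR} as those lying in $\sum_{H \mathrm{ \ subgroup \ of \ } G} P_{H}$. Applying it to the real function $\mathrm{Q} = Q[F]$ and chaining with the previous equivalence yields that $F$ is self-adjoint if and only if $Q[F] \in \sum_{H} P_{H}$, which is precisely the asserted equivalence.

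There is essentially no obstacle here beyond bookkeeping, since the corollary is a formal consequence of results already proved; the one point deserving attention is that the reduction of Eq.\eqref{eq:condSAC} to Eq.\eqref{eq:condSAR} is valid \emph{only} because $Q[F]$ is assumed real. This is exactly why reality of $Q[F]$ is imposed as a hypothesis of the corollary rather than derived as part of the conclusion, and why Lemma~\ref{lem:RealKD} (stated for real functions) can be applied to $Q[F]$ without further argument.
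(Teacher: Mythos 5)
Your proposal is correct and follows exactly the route the paper intends: the corollary is a formal consequence of the bijectivity of $Q$ in Eq.\eqref{eq:Qmapdef}, the self-adjointness criterion Eq.\eqref{eq:condSAC}, its reduction to Eq.\eqref{eq:condSAR} under the reality hypothesis on $Q[F]$, and Lemma~\ref{lem:RealKD}. Your closing remark on why reality must be a hypothesis (so that Eq.\eqref{eq:condSAC} collapses to Eq.\eqref{eq:condSAR} and Lemma~\ref{lem:RealKD} applies) is precisely the point the paper leaves implicit.
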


The proof of Theorem~\ref{thm:VKDr} now follows from the observation that $P_{H}$ is generated by $\left( \eta^{H}_{g_0,\chi_0}\right)_{(g_0,\chi_0)\in G/ H \times \hat{G}/ H^{\perp}}.$ 
This characterisation of the KD-real observables is a first step towards the characterisation of the KD-positive states, to which we turn next. 

\section{Characterisation of the Kirkwood-Dirac-positive states}\label{sec:CharKDpos}
 We give, in Theorem~\ref{thm:KDpositivestates}, a necessary and sufficient condition guaranteeing that $\EcalKDC = \conv{\EcalKDCpu}$. 
 
 First, we characterise the positive functions $\mathrm{Q}: G\times \hat{G} \to \R$ that are $H\times H^{\perp}$-periodic.
\begin{Lemma}\label{lem:posFunc}
Let $H\subseteq G$ be a subgroup. The space of positive $H\times H^{\perp}$-periodic functions $\mathrm{Q}: G\times \hat{G} \to \R$ coincides with $\mathrm{span}_{\R^{+}}\left(\left\{\eta^{H}_{g_0,\chi_0}\right\}_{(g_0,\chi_0)\in G/ H \times \hat{G}/ H^{\perp} }\right)$.
\end{Lemma}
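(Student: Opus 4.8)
The plan is to exploit the explicit form of the functions $\eta^{H}_{g_0,\chi_0}$ recorded in Eq.\eqref{eq:KDPosPure}. From that formula, $\eta^{H}_{g_0,\chi_0}$ is exactly $|G|^{-1}$ times the indicator of the product coset $(g_0+H)\times(\chi_0 H^{\perp})$ in $G\times\hat{G}$. As $(g_0,\chi_0)$ ranges over $G/H\times\hat{G}/H^{\perp}$, these product cosets partition $G\times\hat{G}$, so the family $\{\eta^{H}_{g_0,\chi_0}\}$ has pairwise disjoint supports. This disjointness is the one structural fact that makes both inclusions transparent, so I would establish it first.

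For the inclusion $\supseteq$, I would simply note that each $\eta^{H}_{g_0,\chi_0}$ is itself $H\times H^{\perp}$-periodic and nonnegative; hence any nonnegative linear combination is again $H\times H^{\perp}$-periodic and nonnegative, i.e.\ a positive $H\times H^{\perp}$-periodic function. This direction needs no computation beyond closure of the cone under nonnegative sums.

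For the reverse inclusion $\subseteq$, let $\mathrm{Q}$ be a positive $H\times H^{\perp}$-periodic function. Being $H\times H^{\perp}$-periodic, $\mathrm{Q}$ is constant on each product coset, and Eq.\eqref{eq:PH1} supplies the decomposition $\mathrm{Q}=\sum_{(g_0,\chi_0)}|G|\,\mathrm{Q}(g_0,\chi_0)\,\eta^{H}_{g_0,\chi_0}$, where $\mathrm{Q}(g_0,\chi_0)$ denotes the common value of $\mathrm{Q}$ on the coset. Because the supports are disjoint, the coefficient of $\eta^{H}_{g_0,\chi_0}$ is read off unambiguously as $|G|\,\mathrm{Q}(g_0,\chi_0)$, and this is $\geq 0$ since $\mathrm{Q}\geq 0$ and $|G|>0$. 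Thus every coefficient is nonnegative and $\mathrm{Q}\in\mathrm{span}_{\R^{+}}\left(\{\eta^{H}_{g_0,\chi_0}\}\right)$.

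I do not expect a genuine obstacle here: the statement reduces to the observation that the $\eta^{H}_{g_0,\chi_0}$ are scaled indicators of a partition of $G\times\hat{G}$, so that pointwise positivity of a periodic function is equivalent to positivity of its (unique) expansion coefficients. The only point requiring a line of care is tracking the normalization factor $|G|$ from Eq.\eqref{eq:PH1}, so that the coefficient of $\eta^{H}_{g_0,\chi_0}$ is correctly identified as $|G|\,\mathrm{Q}(g_0,\chi_0)$ rather than as $\mathrm{Q}(g_0,\chi_0)$ itself.
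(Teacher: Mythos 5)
Your proof is correct and follows essentially the same route as the paper, which simply invokes Eq.\eqref{eq:PH1}: the $\eta^{H}_{g_0,\chi_0}$ are scaled indicators of the product cosets partitioning $G\times\hat{G}$, so the expansion coefficients $|G|\,\mathrm{Q}(g_0,\chi_0)$ of a positive $H\times H^{\perp}$-periodic function are nonnegative, and conversely nonnegative combinations are positive and periodic. You merely spell out the disjoint-support observation that the paper leaves implicit.
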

The proof of this Lemma is a direct consequence of Eq.\eqref{eq:PH1}. 
Combining Lemma~\ref{lem:posFunc} with Corollary~\ref{cor:SAO}, we obtain a necessary and sufficient condition guaranteeing that all KD-positive states are convex combinations of pure KD-positive states.

\begin{Theorem}\label{thm:KDpositivestates}
    The following two statements are equivalent:
\begin{itemize}[leftmargin=*]
        \item for any positive $\mathrm{Q}:G\times \hat{G}\to \R$ such that $Q^{-1}[\mathrm{Q}]$ is a positive operator on $L^2(G)$, there exists $(\mathrm{Q}_{H})_{H \mathrm{ \ subgroup \ of \ } G}$ such that 
        \begin{itemize}
            \item $\mathrm{Q} = \sum_{H\mathrm{ \ subgroup \ of \ } G} \mathrm{Q}_{H}$
            \item $\mathrm{Q}_{H}$ is positive and $H\times H^{\perp}$-periodic;
        \end{itemize}
    \item $\EcalKDC = \conv{\EcalKDCpu}$.
\end{itemize}
\end{Theorem}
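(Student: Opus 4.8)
The plan is to transport the operator identity $\EcalKDC = \conv{\EcalKDCpu}$ through the linear bijection $Q$ of Eq.\eqref{eq:Qmapdef}, turning Statement~2 into a statement about the functions $Q[\rho]$ that can then be matched term-by-term with Statement~1. Throughout I would use two facts already in place: the always-valid inclusion $\conv{\EcalKDCpu}\subseteq\EcalKDC$ from Eq.\eqref{eq:KDgeom}, and the observation that $\rho\in\EcalKDC$ is equivalent to $\mathrm{Q}:=Q[\rho]$ being positive with $Q^{-1}[\mathrm{Q}]=\rho$ a positive operator and $\sum_{(g,\chi)}\mathrm{Q}(g,\chi)=\Tr\rho=1$. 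By Theorem~\ref{thm:KDPure} the elements of $\EcalKDCpu$ are exactly the $\psi^{H}_{g_0,\chi_0}$, whose KD distributions are the $\eta^{H}_{g_0,\chi_0}$ of Eq.\eqref{eq:KDPosPure}; since $Q$ is linear, a density matrix lies in $\conv{\EcalKDCpu}$ iff its KD distribution is a convex combination of the $\eta^{H}_{g_0,\chi_0}$. A key normalization remark is that each $\eta^{H}_{g_0,\chi_0}$ sums to $1$ over $G\times\hat G$ (because $|H||H^{\perp}|=|G|$), which will let me pass freely between convex combinations and nonnegative combinations.

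For the implication Statement~1 $\Rightarrow$ Statement~2, I only need $\EcalKDC\subseteq\conv{\EcalKDCpu}$. Given $\rho\in\EcalKDC$, I would apply Statement~1 to $\mathrm{Q}=Q[\rho]$ to obtain $\mathrm{Q}=\sum_{H}\mathrm{Q}_{H}$ with each $\mathrm{Q}_{H}$ positive and $H\times H^{\perp}$-periodic. Lemma~\ref{lem:posFunc} then rewrites each $\mathrm{Q}_{H}$ as a nonnegative combination $\sum_{g_0,\chi_0}c^{H}_{g_0,\chi_0}\eta^{H}_{g_0,\chi_0}$, so $\mathrm{Q}$ is a nonnegative combination of the $\eta^{H}_{g_0,\chi_0}$. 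Summing over $(g,\chi)$ and using that each $\eta^{H}_{g_0,\chi_0}$ has unit total mass while $\mathrm{Q}$ sums to $\Tr\rho=1$ forces the total weight to equal $1$, so the combination is in fact convex; applying $Q^{-1}$ then exhibits $\rho$ as a convex combination of the $\psi^{H}_{g_0,\chi_0}$, i.e. $\rho\in\conv{\EcalKDCpu}$.

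For the converse Statement~2 $\Rightarrow$ Statement~1, I would start from an arbitrary positive $\mathrm{Q}\geq 0$ with $F:=Q^{-1}[\mathrm{Q}]$ a positive operator. If $\mathrm{Q}=0$ the decomposition is trivial; otherwise $t:=\Tr F=\sum_{(g,\chi)}\mathrm{Q}(g,\chi)>0$, so $\rho:=F/t$ is a density matrix with $Q[\rho]=\mathrm{Q}/t\geq 0$, hence $\rho\in\EcalKDC=\conv{\EcalKDCpu}$. Writing $\rho=\sum_{k}\lambda_{k}\psi^{H_k}_{g_k,\chi_k}$ as a convex combination and applying $Q$ gives $\mathrm{Q}=t\sum_{k}\lambda_{k}\eta^{H_k}_{g_k,\chi_k}$; grouping the terms by the underlying subgroup $H_k=H$ and invoking Lemma~\ref{lem:posFunc} to recognize each group as a positive $H\times H^{\perp}$-periodic function yields exactly the required decomposition $\mathrm{Q}=\sum_{H}\mathrm{Q}_{H}$.

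The computations are all routine; the only point requiring genuine care — and the closest thing to an obstacle — is the normalization bookkeeping that converts between the convex structure on states and the purely nonnegative, unnormalized decomposition permitted in Statement~1. This goes through cleanly precisely because every $\eta^{H}_{g_0,\chi_0}$ has unit total mass, so scaling by the trace $t$ and regrouping by subgroup never disturbs either positivity or periodicity. One should also note that Statement~1 quantifies over all positive $\mathrm{Q}$ with $Q^{-1}[\mathrm{Q}]$ positive, not merely the unit-trace ones, which is exactly why the converse direction must first normalize $F$ to a density matrix before applying Statement~2 and then rescale the resulting decomposition by $t$.
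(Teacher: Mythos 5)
Your proposal is correct and follows essentially the same route as the paper's own proof: both directions transport the problem through the linear bijection $Q$, use Lemma~\ref{lem:posFunc} to pass between positive $H\times H^{\perp}$-periodic functions and nonnegative combinations of the $\eta^{H}_{g_0,\chi_0}$, and handle the unnormalized case in the converse by dividing out $\Tr F$. Your explicit bookkeeping via the unit total mass of each $\eta^{H}_{g_0,\chi_0}$ is just a slightly more detailed version of the paper's appeal to $\Tr(\rho)=1$.
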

\begin{proof}
    We prove the direct implication. Let $\rho\in \EcalKDC$. Then, $Q[\rho]$ is a function that satisfies Eq.\eqref{cor:SAO} and that is positive as $\rho$ is KD-positive. Thus, by hypothesis,
    \begin{equation}\label{eq:qrhoqh}
        Q[\rho] = \sum_{H\mathrm{ \ subgroup \ of \ } G} \mathrm{Q}_{H},
    \end{equation}
    where each $\mathrm{Q}_H$ is positive and $H\times H^\perp$-periodic.
    It then follows from Lemma~\ref{lem:posFunc} and Eq.\eqref{eq:PH1} that
    \begin{equation}\label{eq:PH1bis}
    \mathrm{Q}_H = \sum_{(g_0,\chi_0)\in G/H\times\hat{G}/H^{\perp}} \card{G}\mathrm{Q}_H(g_0,\chi_0)\eta^{H}_{g_0,\chi_0}.
\end{equation}
Applying $Q^{-1}$ to both sides of this equation, and recalling that the  $\left(\eta^{H}_{g_0,\chi_0}\right)_{(g_0,\chi_0)\in G/H\times\hat{G}/H^{\perp}}$ are positive, one concludes that $Q^{-1}[\mathrm{Q}_H]$ is a positive and KD-positive operator. Hence
 \begin{equation}\label{eq:convex_1bis}
        \rho = \sum_{H\mathrm{ \ subgroup \ of \ } G} \sum_{(g_{0},\chi_{0})\in G/H\times G/H^{\perp}} |G|\mathrm{Q}_H(g_0, \chi_0)\cket{\psi^{H}_{g_0,\chi_0}}\bra{\psi^{H}_{g_0,\chi_0}}.
 \end{equation}
As $\mathrm{Tr}(\rho)=1$, Eq.\eqref{eq:convex_1bis} concludes the proof as $\rho$ is written as a convex combination of KD-positive pure states.
    
    We now prove the indirect implication. Suppose $\mathrm{Q}: G\times \hat{G}\to \R$ is a positive function for which $F = Q^{-1}[\mathrm{Q}]$ is a positive operator. If $F$ is the zero operator, there is nothing to prove. Suppose now that $F$ is not the zero operator, then $\frac{F}{\mathrm{Tr}(F)}$ is a KD-positive state as $Q[F]$ is positive. Thus, $\frac{F}{\mathrm{Tr}(F)}$ is a convex combination of KD-positive pure states \textit{i.e.} there exists $\left(\lambda_{g_{0},\chi_{0}}^{H}\right)\in\R^{+}$ such that
    \begin{equation}
    F = \mathrm{Tr}(F) \sum_{(g_{0},\chi_{0},H)} \lambda_{g_{0},\chi_{0}}^{H} \cket{\psi^{H}_{g_0,\chi_0}}\bra{\psi^{H}_{g_0,\chi_0}} \mathrm{ \ and \ }  \sum_{(g_{0},\chi_{0},H)} \lambda_{g_{0},\chi_{0}}^{H} = 1.
    \end{equation}
    Composing by $Q$, we have
    \begin{eqnarray}
        \nonumber \mathrm{Q} &=& \sum_{H} Q\left(\mathrm{Tr}(F) \sum_{(g_{0},\chi_{0})} \lambda_{g_{0},\chi_{0}}^{H} \cket{\psi^{H}_{g_0,\chi_0}}\bra{\psi^{H}_{g_0,\chi_0}}\right)\\
        &=&\sum_{H} \sum_{(g_{0},\chi_{0})} \mathrm{Tr}(F)\lambda_{g_{0},\chi_{0}}^{H} \eta^{H}_{g_0,\chi_0},
    \end{eqnarray}
    which shows, by Lemma~\ref{lem:posFunc}, that $\mathrm{Q} =\sum_{H} \mathrm{Q}_{H} $ where $\mathrm{Q}_{H}$ is positive and $H\times H^{\perp}$-periodic.
\end{proof}

In the next section, we apply this result to the discrete Fourier transform, in order to extend to prime power dimensions a Theorem in~\cite{langrenez2023characterizing2}, that states that $\EcalKDC = \conv{\EcalKDCpu}$ in prime dimensions.

\section{Discrete Fourier transform in prime power dimensions}\label{sec:DFtprimepower}
The goal of this section is to prove the following theorem.
\begin{Theorem}\label{thm:primepowers}
    Let $p$ be a prime number and $k\in \N^{*}$. Then, for $G=\Z_{p^{k}}$, we have that 
    \begin{equation}
     \EcalKDC=\conv{\EcalKDCpu}. 
    \end{equation}
\end{Theorem}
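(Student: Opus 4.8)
The plan is to apply Theorem~\ref{thm:KDpositivestates}: it suffices to show that for $G=\Z_{p^k}$ every positive $\mathrm{Q}\colon G\times\hat{G}\to\R$ with $Q^{-1}[\mathrm{Q}]$ a positive operator admits a decomposition $\mathrm{Q}=\sum_{H}\mathrm{Q}_H$ into positive, $H\times H^\perp$-periodic pieces. The decisive structural feature of $\Z_{p^k}$ that I would exploit throughout is that its subgroups form a single chain $\{0\}=H_0\subsetneq H_1\subsetneq\cdots\subsetneq H_k=G$ with $\card{H_j}=p^j$ and $\card{H_j^\perp}=p^{k-j}$; this is precisely what fails for $\Z_6$ and $\Z_2\times\Z_2$, so the chain property must be used in an essential way.

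First I would recast everything on the Fourier side. Writing $\hat{G}\cong\Z_{p^k}$ and expanding $\mathrm{Q}$ in the characters of $G\times\hat{G}$, Corollary~\ref{cor:SAO} says that $\mathrm{Q}$ is the symbol of a self-adjoint operator precisely when its Fourier transform is supported on $\{(s,t): v_p(s)+v_p(t)\ge k\}$, where $v_p$ is the $p$-adic valuation (capped at $k$), while $P_{H_j}$ corresponds to Fourier support in $R_j=\{(s,t): v_p(s)\ge j,\ v_p(t)\ge k-j\}$. Since $\bigcup_j R_j$ is exactly the self-adjoint region, a decomposition $\mathrm{Q}=\sum_j\mathrm{Q}_j$ with $\mathrm{Q}_j$ of type $H_j$ always exists at the level of functions (for instance by assigning each frequency with $v_p(s)=a$ to the index $j=a$); equivalently, in the ``box picture'' $\eta^{H_j}_{g_0,\chi_0}$ is $\card{G}^{-1}$ times the indicator of an axis-aligned $p^j\times p^{k-j}$ box, and these boxes of all aspect ratios span the self-adjoint space. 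The entire difficulty is therefore not the existence of a decomposition but the \emph{positivity} of its pieces.

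To produce positive pieces I would argue by induction on $k$, peeling one end of the chain at a time. The base case $k=1$ is immediate: the symbol has the form $f(g)+h(\chi)$, and setting $\mathrm{Q}_{H_1}(\chi)=h(\chi)-\min_\chi h$ and $\mathrm{Q}_{H_0}(g)=f(g)+\min_\chi h$ gives two positive pieces using only pointwise positivity of $\mathrm{Q}$ (this recovers the prime-dimension result). For the inductive step I would use that, by Eq.\eqref{eq:WHactKD}, averaging $\mathrm{Q}$ over the Heisenberg translations indexed by $H_j\times H_j^\perp$ corresponds on the operator side to averaging $\rho=Q^{-1}[\mathrm{Q}]$ over a family of unitary conjugates; such an average is simultaneously pointwise nonnegative and a positive operator, and it lands in $P_{H_j}$. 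The aim is to subtract from $\mathrm{Q}$ a positive type-$H_k$ (respectively type-$H_0$) piece so that the remainder is still pointwise nonnegative, still the symbol of a positive operator, and, crucially, has Fourier support allowing it to be identified, after the coarse-graining induced by collapsing $H_1$, with a KD-positive symbol on $\Z_{p^{k-1}}$, to which the induction hypothesis applies. Here the positive-definiteness of the ``diagonal'' functions $h\mapsto\langle a_g|\rho|a_{g+h}\rangle$ (Bochner's theorem, equivalent to KD-positivity of $\mathrm{Q}$), together with positive semidefiniteness of $\rho$, should control the signs of the peeled coefficients.

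I expect the main obstacle to be precisely this last point: the naive peeling of both extreme types does not reduce cleanly to $\Z_{p^{k-1}}$, since the ``interior'' Fourier frequencies that survive carry valuation sum $k-2$ rather than the value $k-1$ required for a self-adjoint symbol on $\Z_{p^{k-1}}$. Making the reduction exact will require choosing the peeled pieces so that these boundary frequencies are absorbed correctly, and then verifying that the coefficients $\lambda^{H_j}_{g_0,\chi_0}$ so produced are nonnegative, a step where the operator positivity of $\rho$ (and not merely pointwise positivity of $\mathrm{Q}$, which already suffices when $k=1$) must be invoked in full.
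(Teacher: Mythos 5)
Your setup is right and matches the paper's strategy: reduce to Theorem~\ref{thm:KDpositivestates}, exploit the fact that the subgroups of $\Z_{p^k}$ form a single chain $H_0\subsetneq H_1\subsetneq\cdots\subsetneq H_k$, and note that existence of \emph{some} decomposition into $H_m\times H_m^\perp$-periodic pieces is automatic (your Fourier-support description of the self-adjoint symbols and of $P_{H_j}$ is correct, and is just Lemma~\ref{lem:RealKD} in this case). Your base case $k=1$, subtracting $\min_\chi h$, is also exactly the right move. But the inductive step is a genuine gap, and you flag it yourself: the proposed induction on $k$, reducing $\Z_{p^k}$ to $\Z_{p^{k-1}}$ by collapsing $H_1$, runs into precisely the obstruction you describe (surviving frequencies with valuation sum $k-2$ instead of $k-1$), and your fallback --- that operator positivity of $\rho=Q^{-1}[\mathrm{Q}]$ must be ``invoked in full'' to control the peeled coefficients --- is a dead end: the paper never uses operator positivity anywhere in the decomposition step, only pointwise nonnegativity of $\mathrm{Q}$.

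The missing idea is to induct not on $k$ (the size of the group) but on the length of the filtration, staying inside the same $G\times\hat G$ throughout. Write $\mathrm{Q}=\sum_{m=0}^{k}\mathrm{Q}_m$ with $\mathrm{Q}_m$ being $H_m\times H_m^\perp$-periodic. The key observation is that for $s\in H_1$,
\begin{equation*}
\mathrm{Q}(g+s,\chi)-\mathrm{Q}(g,\chi)=\mathrm{Q}_0(g+s,\chi)-\mathrm{Q}_0(g,\chi),
\end{equation*}
because every other piece $\mathrm{Q}_m$, $m\ge 1$, is already $H_1$-periodic in its first argument (since $H_1\subseteq H_m$). Hence, setting $\tilde{\mathrm{Q}}_0(g,\chi)=\mathrm{Q}_0(g,\chi)-\min_{s\in H_1}\mathrm{Q}_0(g+s,\chi)$ and letting $s_1$ be a minimizing shift, one gets $\tilde{\mathrm{Q}}_0(g,\chi)=\mathrm{Q}(g,\chi)-\mathrm{Q}(g+s_1,\chi)\le \mathrm{Q}(g,\chi)$, so $0\le\tilde{\mathrm{Q}}_0\le\mathrm{Q}$ pointwise, using nothing but pointwise nonnegativity of $\mathrm{Q}$; moreover $\tilde{\mathrm{Q}}_0$ is still $H_0\times H_0^\perp$-periodic. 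The remainder $\mathrm{Q}-\tilde{\mathrm{Q}}_0$ is then nonnegative and decomposes along the \emph{shorter} chain $H_1\subseteq\cdots\subseteq H_k$, because the leftover $(\mathrm{Q}_0-\tilde{\mathrm{Q}}_0)=\min_{s\in H_1}\mathrm{Q}_0(\cdot+s,\cdot)$ is absorbed into the $H_1$-piece: $(\mathrm{Q}_0-\tilde{\mathrm{Q}}_0)+\mathrm{Q}_1$ is $H_1\times H_1^\perp$-periodic. Iterating peels off one nonnegative periodic piece at a time; this is the paper's Lemma~\ref{lem:Sum}, proved by induction on the number of pieces for arbitrary pairs of opposite filtrations. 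In other words, your $k=1$ trick applied to the lowest link of the chain, plus the absorption of the leftover into the next link, is already the whole proof; no passage to a smaller group and no appeal to positive semidefiniteness of $\rho$ is required.
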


The proof relies on Lemma~\ref{lem:RealKD} and Theorem~\ref{thm:KDpositivestates}. We need in addition the following lemma that will allow us to exploit the particular structure of the subgroups of $\Z_d$, when $d=p^k$.
\begin{Lemma}\label{lem:Sum}
Let $G$ and $K$ be finite abelian groups. Suppose that there exist $N\in\N$ and two filtrations $G_{0}\subseteq G_{1} \subseteq \dots \subseteq G_{N} \subseteq G$ and $K_{N}\subseteq K_{N-1} \subseteq \dots \subseteq K_{1} \subseteq K_{0} \subseteq K$. If $f:G\times K \to \R$ is nonnegative and $f = \sum_{i=0}^{N} f_{i}$ where for all $i\in\IntEnt{0}{N}, f_{i}$ is $G_{i}\times K_{i}$-periodic, then there exists $(\tilde{f}_{i})_{i\in\IntEnt{0}{N}}$ such that $f = \sum_{i=0}^{N} \tilde{f}_{i}$ and for all $i\in\IntEnt{0}{N}, \tilde{f}_{i}$ is $G_{i}\times K_{i}$-periodic and nonnegative.
\end{Lemma}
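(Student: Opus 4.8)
The plan is to argue by induction on $N$, peeling off a single nonnegative term $\tilde f_N$ that is $G_N\times K_N$-periodic while leaving a nonnegative remainder that still decomposes along the truncated filtrations $G_0\subseteq\dots\subseteq G_{N-1}$ and $K_{N-1}\subseteq\dots\subseteq K_0$, to which the induction hypothesis applies. The case $N=0$ is immediate, since then $f=f_0$ is itself nonnegative and $G_0\times K_0$-periodic. For the inductive step the main tool is the averaging (conditional-expectation) operator: for a subgroup $L\subseteq K$ write
\[
(\mathrm{Av}_{L}\,\phi)(x,y)=\frac{1}{\card{L}}\sum_{\ell\in L}\phi(x,y+\ell),
\]
the orthogonal projection onto functions that are $L$-periodic in the second variable. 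It is positivity-preserving, it acts only on the second variable (hence preserves any periodicity in the first variable), and it fixes every function already periodic under a subgroup containing $L$.

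Next I would isolate the part of $f_N$ that no other term can carry. Since $K_{N-1}\subseteq K_i$ for $i<N$, each such $f_i$ is $K_{N-1}$-periodic in the second variable and is therefore fixed by $\mathrm{Av}_{K_{N-1}}$; consequently
\[
u:=(\mathrm{Id}-\mathrm{Av}_{K_{N-1}})f=(\mathrm{Id}-\mathrm{Av}_{K_{N-1}})f_N .
\]
Writing $V_i$ for the space of $G_i\times K_i$-periodic functions, any admissible $\tilde f_N\in V_N$ with $f-\tilde f_N\in\sum_{i<N}V_i$ is forced to satisfy $(\mathrm{Id}-\mathrm{Av}_{K_{N-1}})\tilde f_N=u$, so the only freedom is the choice of its $K_{N-1}$-periodic part. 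I would thus seek $\tilde f_N=u+c$ with $c$ a $G_N\times K_{N-1}$-periodic correction subject to the two nonnegativity constraints $\tilde f_N\ge0$ and $f-\tilde f_N=\mathrm{Av}_{K_{N-1}}f-c\ge0$, that is,
\[
-\,u\ \le\ c\ \le\ \mathrm{Av}_{K_{N-1}}f .
\]

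The crux, and the step I expect to be the main obstacle, is to produce such a $G_N\times K_{N-1}$-periodic $c$ sandwiched between $-u$ and $\mathrm{Av}_{K_{N-1}}f$. A periodic function lying between a lower bound $\ell$ and an upper bound $U$ exists if and only if $\ell(P)\le U(P')$ for every pair $P,P'$ in a common $G_N\times K_{N-1}$-coset. Here I would exploit that $u$ is $G_N$-periodic in the first variable (it is built from $f_N$ by an operation touching only the second variable) and that $\mathrm{Av}_{K_{N-1}}f$ is $K_{N-1}$-periodic in the second: for $P=(x,y)$, $P'=(x',y')$ with $x-x'\in G_N$ and $y-y'\in K_{N-1}$, these two periodicities move both evaluations to the common point $(x',y)$, and the inequality collapses to
\[
-\,u(x',y)\ \le\ (\mathrm{Av}_{K_{N-1}}f)(x',y),\qquad\text{i.e.}\qquad f(x',y)\ge0,
\]
using $u+\mathrm{Av}_{K_{N-1}}f=f$. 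This holds by nonnegativity of $f$; it is precisely the nesting of the two filtrations that supplies the periodicities making this reduction go through.

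Finally I would feed the remainder to the induction. With $c$ chosen as above, $\tilde f_N=u+c\ge0$ is $G_N\times K_N$-periodic and $f-\tilde f_N\ge0$. Moreover $f_N-\tilde f_N=\mathrm{Av}_{K_{N-1}}f_N-c$ is $G_N\times K_{N-1}$-periodic, hence $G_{N-1}\times K_{N-1}$-periodic since $G_{N-1}\subseteq G_N$, so it may be absorbed into the $(N-1)$-st term; thus
\[
f-\tilde f_N=\sum_{i=0}^{N-2}f_i+\bigl(f_{N-1}+f_N-\tilde f_N\bigr)
\]
is a nonnegative function expressed as a sum of $N$ terms periodic along the truncated filtrations. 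The induction hypothesis applied to $f-\tilde f_N$ yields nonnegative periodic $\tilde f_0,\dots,\tilde f_{N-1}$, which together with $\tilde f_N$ give the desired decomposition.
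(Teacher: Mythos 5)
Your proof is correct, and once unwound it is essentially the paper's own argument run from the opposite end of the filtration: the paper peels off the \emph{bottom} term via $\tilde f_0(g,k)=f_0(g,k)-\min_{s\in G_1}f_0(g+s,k)$, exploiting that all other $f_i$ are $G_1$-periodic in the first variable (so that differences of $f_0$ along $G_1$ equal differences of $f$, keeping the remainder nonnegative), while you peel off the \emph{top} term exploiting that all other $f_i$ are $K_{N-1}$-periodic in the second variable. Indeed, with the canonical choice $c(P)=\max\{-u(P''):P''\in P+(G_N\times K_{N-1})\}$ in your sandwich step, the $G_N$-periodicity of $u$ in the first variable collapses your construction to $\tilde f_N(x,y)=f_N(x,y)-\min_{t\in K_{N-1}}f_N(x,y+t)$, the exact mirror of the paper's formula; your averaging-operator and sandwich-criterion framing is heavier machinery than the paper's direct min-subtraction, but the inductive mechanism (nonnegative peeled term, nonnegative remainder, correction absorbed into the adjacent term, induction on the truncated filtration) is identical.
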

\begin{proof}
    We prove this lemma by induction on $N$. 
    For $N=0$, $f=f_{0}$ is nonnegative and $\tilde{f}_{0}=f_{0}$ is nonnegative and $G_0\times K_0$-periodic.

    Suppose now that the lemma is true at rank $N-1$, $N\geq1$. Suppose that $G$ and $K$ are two groups equipped with filtrations $G_{0}\subseteq G_{1} \subseteq \dots \subseteq G_{N} \subseteq G$ and $K_{N}\subseteq K_{N-1} \subseteq \dots \subseteq K_{1} \subseteq K_{0} \subseteq K$. Let $f:G\times K \to \R$ be a nonnegative function such that $f = \sum_{i=0}^{N} f_{i}$ and $i\in\IntEnt{0}{N}, f_{i}$ is $G_{i}\times K_{i}$-periodic. For $i\in\IntEnt{1}{d}$,  as  $G_{1} \subseteq G_{i}, f_{i}$ is  $G_{1}\times K_{i}$-periodic. Thus, for all $(g,k,s) \in G\times K \times G_{1}$:
    \begin{equation}\label{eq:Rec1}
     \begin{array}{rcl}
          f(g+s,k) - f(g,k) &=&\displaystyle \sum_{i=0}^{N} f_{i}(g+s,k) - f_{i}(g,k) \\
          &=& \displaystyle f_{0}(g+s,k) - f_{0}(g,k) + \sum_{i=1}^{N} f_{i}(g+s,k) - f_{i}(g,k) \\
          &=& \displaystyle f_{0}(g+s,k) - f_{0}(g,k).
     \end{array}
    \end{equation}
    We want to define $\tilde{f}_{0}:G\times K \to \R $ so that $\tilde{f}_0$ is $G_0\times K_0$-periodic and that $0 \leqslant \tilde{f}_0 \leqslant f$. For that purpose, we define for all $(g,k)\in G\times K$,
    \[
    \tilde{f}_{0}(g,k) = f_{0}(g,k) - \min_{s\in G_{1}} f_{0}(g+s,k).
    \]
    As $f_0$ is $G_0\times K_0$-periodic, $\tilde{f}_{0}$ is also $G_0\times K_0$-periodic. We also have that for all $(g,k)\in G\times K$,
    \[
    0 \leqslant \tilde{f}_{0}(g,k) = f_{0}(g,k) - \min_{s\in G_{1}} f_{0}(g+s,k).
    \]
    Since $G_1$ is a finite group, there exists $s_1\in G_{1}$ such that $f_{0}(g+s_1,k) = \min_{s\in G_{1}} f_{0}(g+s,k)$ and thus, using Eq.\eqref{eq:Rec1}, for all $(g,k)\in G\times K$,
    \[
     0 \leqslant \tilde{f}_{0}(g,k) =  f_{0}(g,k) - f_{0}(g+s_{1},k) = f(g,k) - f(g+s_1,k) \leqslant f(g,k)  
    \]
    as $f$ is nonnegative.
    We define $\tilde{f} = f-\tilde{f}_{0}$. Then $\tilde{f}$ is nonnegative and 
    \begin{equation}\label{eq:tildef}
    \tilde{f} = f_0 - \tilde{f}_{0} + f_{1} + \sum_{i=2}^{N}f_{i}=\sum_{i=1}^N e_i
    \end{equation}
    where the $e_{i}$ are $G_{i}\times K_{i}$-periodic for $i\in\IntEnt{1}{N}$, since $f_0 - \tilde{f}_{0} + f_{1}$ is $G_{1}\times K_{1}$-periodic. 
    By the induction hypothesis, there then exist $(\tilde{e}_{i})_{i\in\IntEnt{1}{N}}$ such that $\tilde{f} = \sum_{i=1}^{N} \tilde{e}_{i}$ and for all $i\in\IntEnt{1}{N}$, $\tilde{e}_{i}$ is nonnegative and $G_{i}\times K_{i}$-periodic. This concludes the induction as $f = \tilde{f}_{0} + \sum_{i=1}^{N} \tilde{e}_{i}$ since for all $i\in\IntEnt{1}{N}$, $\tilde{e}_{i}$ is nonnegative and $G_{i}\times K_{i}$-periodic and $\tilde{f}_{0}$ is positive and $G_{0}\times K_{0}$-periodic.
\end{proof}

\noindent\textit{Proof of Theorem~\ref{thm:primepowers}.}

As $G=\Z_{p^{k}}$, a subgroup $H$ of $G$ is of the form $\Z_{p^{m}}$ for $m\in \IntEnt{0}{k}$. We will denote by $H_{m}$ the subgroup $\Z_{p^{m}}$, for $0\leq m\leq k$.
One has that
\begin{equation}
    H_0 \subset H_1 \subset \dots \subset H_k.
\end{equation} Thus, we also have that
\begin{equation}
    (H_k)^{\perp} \subset (H_{k-1})^{\perp} \subset \dots \subset (H_0)^{\perp}.
\end{equation}
To apply Theorem~\ref{thm:KDpositivestates}, we suppose that $\mathrm{Q}: G\times \hat{G}$ is a positive function that satisfies Eq.\eqref{eq:condSAR}. Lemma~\ref{lem:RealKD} then implies that there exist $(\mathrm{Q}_{m})_{m\in\IntEnt{0}{k}}$ such that:
\begin{itemize}
    \item $\mathrm{Q} = \sum_{m=0}^{k} \mathrm{Q}_{m}$,
    \item $\mathrm{Q}_m$ is $H_{m}\times H_{m}^{\perp}$-periodic.
\end{itemize}
Lemma~\ref{lem:Sum} implies that there exist $(\tilde{\mathrm{Q}}_{m})_{m\in\IntEnt{0}{k}}$ such that:
    \begin{itemize}
    \item $\mathrm{Q} = \sum_{m=0}^{k} \tilde{\mathrm{Q}}_{m}$,
    \item $\tilde{\mathrm{Q}}_m$ is non-negative and $H_{m}\times H_{m}^{\perp}$-periodic.
\end{itemize}
Using Theorem~\ref{thm:KDpositivestates}, we conclude that $\EcalKDC = \conv{\EcalKDCpu}$, which ends the proof. \qed

\section{Two examples for which $\conv{\EcalKDCpu} \subsetneq \EcalKDC$}\label{sec:ExamplesFT}

A natural question to ask is whether Theorem~\ref{thm:primepowers} extends to $G=\Z_d$ for composite dimensions $d$ or to other abelian groups $G$. In this section we show this not to be the case in general by analyzing the particular case where $G=\Z_6$, as well as $G=\Z_2\times\Z_2$, which arises in the study of two-qubit systems.

\subsection{Discrete Fourier transform in dimension 6: construction of a KD-positive state outside $\conv{\EcalKDCpu}$}
We focus on the particular example $G=\Z_{6}$. We will construct a state $\rho$ that is KD-positive and that is not in the convex hull of the set of KD-positive states.

All $24$ pure KD-positive states are explicitly given by Theorem~\ref{thm:KDPure} when $G= \Z_6$ and their KD distributions are readily determined. It is then possible to numerically compute the bounding planes of $Q[\conv{\EcalKDCpu}]$. We will focus on one particular bounding plane given by the following KD distribution
\[
Q_{\star} = \begin{pmatrix}
    10&10&1&10&-2&7\\
    10&10&7&-2&10&1\\
    7&1&-2&1&-5&-2\\
    -2&10&-5&-2&-2&1\\
    10&-2&1&-2&-2&-5\\
    1&7&-2&-5&1&-2
\end{pmatrix}.
\]
For any pure KD-positive state $\cket{\psi}$, one has:
\begin{equation}\label{eq:dim6KDpos}
    \mathrm{Tr}\left(Q_{\star}^{\dagger}Q[\psi]\right) \geqslant 0.
\end{equation}
We will construct a mixed KD-positive state $\rho$ such that
\[
\mathrm{Tr}\left(Q_{\star}^{\dagger}Q[\rho]\right) < 0.
\]
Consider $\alpha = \frac{1+\sqrt{3} + \sqrt{8+2\sqrt{3}}}{2}$ and the following KD distribution
\begin{equation}
Q_{\alpha} = \frac{1}{36\alpha+12}\begin{pmatrix}
1&0&2\alpha+1&1&\alpha&1 \\
1 & \alpha& 1 & 2\alpha+1 & 0 & 1 \\
0 & \alpha-1 & 2\alpha & \alpha & \alpha-1 & \alpha \\
2\alpha+1 & \alpha & 2\alpha+1 & 2\alpha+1 & 2\alpha & 1 \\
1 & 2\alpha & 2\alpha+1 & 2\alpha+1& \alpha & 2\alpha+1\\
\alpha & \alpha-1 & \alpha & 2\alpha & \alpha-1 & 0
\end{pmatrix}.
\end{equation}
We can readily compute
\begin{equation}\label{eq:dim6KDex}
\mathrm{Tr}\left(Q_{\star}^{\dagger}Q_\alpha\right) = \frac{3-3\alpha}{3\alpha+1} <0.
\end{equation}
We thus define $\rho_{\alpha} = Q^{-1}[Q_{\alpha}]$. Then $\rho_{\alpha}$ is a non-negative operator (its first five minors are strictly positive and the minor of order $6$ is 0, which implies that $\rho_{\alpha}$ has positive eigenvalues) with $\mathrm{Tr}(\rho_{\alpha}) = 1$. Thus, $\rho_{\alpha}$ is a KD-positive state, as $Q_{\alpha}$ only has positive entries, that is not in the convex hull of the set of KD-positive states according to Eq.\eqref{eq:dim6KDpos} and Eq.\eqref{eq:dim6KDex}.
Consequently, in this particular case, the inclusions in Eq~\eqref{eq:KDgeom} become
\begin{equation}
    \conv{\{\cket{a_g}\bra{a_g}, \cket{b_\chi}\bra{b_\chi}\mid g\in G, \chi\in \hat{G}\}}\subsetneq \EcalKDCpu\subsetneq \EcalKDC.
\end{equation}

\subsection{Fourier transform on $\Z_{2}\times \Z_{2}$}
We now focus on the example of $\Z_{2}\times \Z_{2}$ and will again exhibit a mixed KD-positive state that is not in $\conv{\EcalKDCpu}$. The construction is slightly different from the one used in the previous subsection.  

In this case, the transition matrix between the two bases is given by
\[
U = \frac{1}{2}\begin{pmatrix}
    1 & 1 & 1 & 1 \\
    1 & -1 & 1 & -1 \\
    1 & 1 & -1 & -1 \\
    1 & -1 & -1 & 1 \\
\end{pmatrix},
\]
which is the normalised real Hadamard matrix of dimension $4$. For simplicity of notation, the group $\Z_{2}\times \Z_{2}$ will be written as $\left\{ 00,01,10,11\right\}$. Following Theorem~\ref{thm:KDPure}, we can readily identify the $20$ pure KD-positive states. We will not list them here. Then, we can numerically determine the bounding planes of $\conv{\EcalKDCpu}$ inside $\VR$, as well as their respective normal vectors. We focus on one particular bounding plane, for which the normal vector in $\VR$ is given by
\[
V_{\star} = \frac{1}{20} \begin{pmatrix}
    1 & -4 & -4 & 8 \\
    -4 & 9 & 0 & 4 \\
    -4 & 0 & 9 & 4 \\
    8 & 4 & 4 & 1
\end{pmatrix}.
\]
This particular self-adjoint operator satisfies that
\begin{itemize}[leftmargin=*]
    \item[$\bullet$] for all pure KD-positive  states $\cket{\psi}$, $\bra{\psi}V_{\star}\cket{\psi}\leqslant 0.45$;
    \item[$\bullet$] $\mathrm{Tr}(V_{\star}^{\dagger} V_{\star}) = 1.05$;
    \item[$\bullet$] $\mathrm{Tr}(V_{\star}) = 1$.
\end{itemize}
Note however that $V_{\star}$ is not a positive operator and that $Q[V_{\star}]$ is real, but has both positive and negative entries. 
We can now construct a mixed KD-positive state $\rho_\lambda$ that satisfies $\mathrm{Tr}(\rho_{\lambda} V_{\star}) > 0.45$, as follows. We consider a full-rank state $\rho_{\star}$ that belongs to the bounding plane (so that it is KD-positive) and we mix it with~$V_{\star}$.
Specifically,
\begin{equation}
\begin{array}{rcl}
\rho_{\star}  &=& \displaystyle\frac{1}{2}\cket{a_{01}}\bra{a_{01}} + \frac{1}{8}\left(\frac{\cket{a_{00}} - \cket{a_{01}}}{\sqrt{2}}\right)\left(\frac{\bra{a_{00}} - \bra{a_{01}}}{\sqrt{2}}\right) + \frac{1}{8}\left(\frac{\cket{a_{10}} + \cket{a_{11}}}{\sqrt{2}}\right)\left(\frac{\bra{a_{10}}+ \bra{a_{11}}}{\sqrt{2}}\right) \\
&& \displaystyle +\frac{1}{8}\left(\frac{\cket{a_{00}} - \cket{a_{10}}}{\sqrt{2}}\right)\left(\frac{\bra{a_{00}} - \bra{a_{10}}}{\sqrt{2}}\right) + \frac{1}{8}\left(\frac{\cket{a_{01}} + \cket{a_{11}}}{\sqrt{2}}\right)\left(\frac{\bra{a_{01}} + \bra{a_{11}}}{\sqrt{2}}\right),
\end{array}
\end{equation}
and we consider the convex combination $ \rho_{\lambda} =  (1-\lambda)\rho_{\star}+\lambda V_{\star}$ for $\lambda\in [0,1]$. Then $\rho_{\lambda}$ is a self-adjoint operator and one checks numerically that $\rho_{\lambda}$ is a positive operator, and hence a state, if $\lambda\in [0,0.05]$. Moreover, a numerical computation shows that all entries of $Q[\rho_{\lambda}]$ are nonnegative if $\lambda\in[0,\frac{5}{19}]$.
Hence, provided $\lambda\in [0,0.05]$, $\rho_\lambda$ is a KD-positive state. On the other hand, as $\mathrm{Tr}(\rho_{\star} V_{\star}) = 0.45$, we obtain that
\[
\forall \lambda \in ]0,1], \quad \mathrm{Tr}(\rho_{\lambda}V_{\star}) = 0.45 + 0.6\lambda>0.45.
\]
Thus, if $\lambda\in ]0,0.05]$, $\rho_{\lambda}$ is a mixed KD-positive state that is not in $\conv{\EcalKDCpu}$.
This counterexample proves that, if $G = \Z_2\times\Z_2$, Eq.\eqref{eq:KDgeom} now reads:
\[
\conv{\{\cket{a_g}\bra{a_g}, \cket{b_\chi}\bra{b_\chi}\mid g\in G, \chi\in \hat{G}\}}\subsetneq \conv{\EcalKDCpu}\subsetneq \EcalKDC.
\]
\section{Conclusion and discussion}

We  have defined and studied the Kirkwood-Dirac representation (on $\Hcal=L^2(G)$)  associated to the  Fourier transform on arbitrary finite abelian groups $G$, generalizing the well known construction for the discrete Fourier transform, associated to the group $\Z_d$.  In this general context, we have identified the set $\EcalKDCpu$ of pure Kirkwood-Dirac-positive states as the Heisenberg orbit of the characteristic function of any subgroup of $G$. We have then shown that the real vector space of observables on $\Hcal=L^2(G)$ that are Kirkwood-Dirac real is equal to $\spanR{(\EcalKDCpu)}$. 

We have subsequently turned to the question whether $\conv{\EcalKDCpu} \subsetneq \EcalKDC$, or $\conv{\EcalKDCpu} = \EcalKDC$, where $\EcalKDC$ is the convex set of all Kirkwood-Dirac-positive states. We have provided a group-theoretic characterization of the situation where $\conv{\EcalKDCpu} = \EcalKDC$, which is the simplest one possible: the set of Kirkwood-Dirac-positive states is then a polytope with known vertices. We have used this criterium to show that, when $G=\Z_d$, with $d=p^k$ a prime power, one does indeed have that $\conv{\EcalKDCpu} = \EcalKDC$, so that all Kirkwood-Dirac-positive states are mixtures of the known pure Kirkwood-Dirac-positive states. This generalizes previously results for $k=1$~\cite{langrenez2023characterizing2} and $k=2$~\cite{Yang_2024}.

The question remains whether this simple geometry of $\EcalKDC$ occurs for other abelian groups. We provide two examples where it does not: $G=\Z_6$ and $G=\Z_2\times \Z_2$. In both these examples, we
exhibit Kirkwood-Dirac-positive states that are not mixtures of pure Kirkwood-Dirac-positive states.  What the situation is for arbitrary abelian groups remains an open question. 

\medskip
\noindent \textit{Acknowledgements: } The authors acknowledge the support of the CDP C2EMPI, as well as of the French State under the France-2030 programme, of the University of Lille, of the Initiative of Excellence of the University of Lille, of the European Metropolis of Lille for their funding and support of the R-CDP-24-004-C2EMPI project. This work was supported in part by the CNRS through the MITI interdisciplinary programs. 

D.R. acknowledges funding by the European Union (ERC, FourIntExP, 101078782). Views and opinions expressed are those of the author(s) only and do not necessarily reflect those of the European Union or European Research Council (ERC). Neither the European Union nor ERC can be held responsible for them.
\bibliographystyle{ieeetr}
\bibliography{Bibliography}
\end{document}